\DeclareFontFamily{U}{matha}{}
\DeclareFontShape{U}{matha}{m}{n}{
  <-5.5>    matha5
  <5.5-6.5> matha6 
  <6.5-7.5> matha7
  <7.5-8.5> matha8
  <8.5-9.5> matha9
  <9.5-11>  matha10
  <11->     matha12
}{}
\DeclareSymbolFont{matha}{U}{matha}{m}{n}
\DeclareFontFamily{U}{mathx}{\hyphenchar\font45}
\DeclareFontShape{U}{mathx}{m}{n}{<-> mathx10}{}
\DeclareSymbolFont{mathx}{U}{mathx}{m}{n}
\DeclareMathDelimiter{\ldbrack}{4}{matha}{"76}{mathx}{"30}
\DeclareMathDelimiter{\rdbrack}{5}{matha}{"77}{mathx}{"38}
\DeclareMathSymbol{\bigovoid}{1}{mathx}{"EC}
\numberwithin{equation}{section}
\spnewtheorem{myclaim}{Claim}{\bfseries}{}
\newcommand{\functions}{\mathrm{F}}
\newcommand{\dH}{d_\mathrm{H}}
\newcommand{\asynchronous}[1]{#1_{\mathrm{Asy}}}
\newcommand{\sequential}[1]{#1_{\mathrm{Seq}}}
\newcommand{\pparallel}[1]{#1_{\mathrm{Syn}}}
\newcommand{\semigroup}[1]{\left\langle #1 \right\rangle}
\newcommand{\asynchronousSemigroup}[1]{\asynchronous{\semigroup{#1}}}
\newcommand{\sequentialSemigroup}[1]{\sequential{\semigroup{#1}}}
\newcommand{\parallelSemigroup}[1]{\pparallel{\semigroup{#1}}}
\newcommand{\Sing}{\mathrm{Sing}}
\newcommand{\Sym}{\mathrm{Sym}}
\newcommand{\Alt}{\mathrm{Alt}}
\renewcommand{\(}{\ldbrack}
\renewcommand{\)}{\rdbrack}
\newcommand{\BF}[1]{{\bf\boldmath{#1}\unboldmath}}
\newcommand{\N}{\mathbb{N}}
\newcommand\coords[1]{[1,#1]}
\title{On simulation in automata networks\thanks{This work was partially funded by the CNRS and Royal Society joint research project PRC1861 and the French ANR project FANs ANR-18-CE40-0002 and the ECOS project C16E01.}}
\author{Florian Bridoux\inst{1} \and Maximilien Gadouleau\inst{2} \and Guillaume Theyssier\inst{3}}
\institute{
  Universit\'e Aix-Marseille, CNRS, LIS, Marseille, France
  \email{florian.bridoux@lis-lab.fr}
  \and
  Department of Computer Science, Durham University, Durham, UK
  \email{m.r.gadouleau@durham.ac.uk}
  \and
  Universit\'e Aix-Marseille, CNRS, I2M, Marseille, France
  \email{guillaume.theyssier@cnrs.fr}
}
\tikzset{>=latex}
\begin{document}

\maketitle

\begin{abstract}
  An automata network is a finite graph where each node holds a state from some finite alphabet and is equipped with an update function that changes its state according to the configuration of neighboring states. More concisely, it is given by a finite map $f:Q^n\rightarrow Q^n$. 
  In this paper we study how some (sets of) automata networks can be simulated by some other (set of) automata networks with prescribed update mode or interaction graph. 
  Our contributions are the following. For non-Boolean alphabets and for any network size, there are intrinsically non-sequential transformations (i.e. that can not be obtained as composition of sequential updates of some network). Moreover there is no universal automaton network that can produce all non-bijective functions via compositions of asynchronous updates. On the other hand, we show that there are universal automata networks for sequential updates if one is allowed to use a larger alphabet and then use either projection onto or restriction to the original alphabet. We also characterize the set of functions that are generated by non-bijective sequential updates. Following Tchuente, we characterize the interaction graphs $D$ whose semigroup of transformations is the full semigroup of transformations on $Q^n$, and we show that they are the same if we force either sequential updates only, or all asynchronous updates.  
\end{abstract}

\section{Introduction}

An automata network is a network of entities each equipped with a local update function that changes its state according to the states of neighboring entities.
Automata networks have been used to model different kind of networks: gene networks, neural networks, social networks, or network coding (see \cite{GR16} and references therein). They can also be considered as a model of distributed computation with various specialized definitions \cite{WR79i,WR79ii}. The architecture of an automata network can be represented via its interaction graph, which indicates which update functions depend on which variables. An important stream of research is to determine how the interaction graph affects different properties of the network or to design networks with a prescribed interaction graph and with a specific dynamical property (see \cite{Gad18b} for a review of known results). On the other hand, automata networks are usually associated with an update mode describing how local update functions of each entity are applied at each step of the evolution. In particular, three categories of update modes can be distinguished: sequential (one node update at a time), asynchronous (any subset of nodes at a time) or synchronous (all nodes simultaneously). Studying how changing the update mode affects the properties of an automata network with fixed local update functions is another major trend in this field \cite{NS17,GN12,GOLES2016118}. Comparing the computational power of sequential and parallel machines is of course at the heart of computer science, but the questioning on update modes is also meaningful for applications of automata networks in modeling of natural systems where the synchronous update mode is often considered unrealistic.

For both parameters (interaction graphs and update modes), the set of properties that could be potentially affected is unlimited. In this paper, instead of choosing a set of properties to analyze, we adopt an intrinsic approach: we study how some (sets of) automata networks can be simulated by some other (set of) automata networks with prescribed update mode or interaction graph.

\paragraph{Notations.} We will always consider alphabets of the form ${\(q\)=\{0,\ldots,q-1\}}$ for some $q$ and usually denote by $n$ the number of nodes of the network which are identified by integers in the interval ${\coords{n}}$.
An \textbf{automata network} is a map $f : \(q\)^n \to \(q\)^n$. An element ${x\in\(q\)^n}$ is a configuration and ${x_v}$ denotes the state of node $v$ in configuration $x$. By extension $f_v$ denotes the map ${x\mapsto f(x)_v}$. The rank of $f$ is the size of its image. For any set of coordinates ${V\subseteq\coords{n}}$, ${f^{(V)} : \(q\)^n \to \(q\)^n}$ denotes the following map: 
\[f^{(V)}(x)_i =
  \begin{cases}
    f(x)_i&\text{if }i\in V\\
    x_i&\text{else.}
  \end{cases}
\]
The notation is extended to words of subsets $w = (w_1, \dots, w_k)$ as follows: $f^{(w)} = f^{(w_k)} \circ \cdots \circ f^{(w_1)}$. For $v\in\coords{n}$ we overload this notation by ${f^{(v)}=f^{(\{v\})}}$. 

We will often consider semigroups of functions under compositions: $\semigroup{X}$ where $X$ is a set of functions that denotes the semigroup generated by compositions of elements of $X$. We denote the fact that $S_1$ is a sub-semigroup of $S_2$ by ${S_1\le S_2}$. For any set $X$, ${\Sym(X)}$ is the set of permutations on $X$.
We denote the set of all networks $f : \(q\)^n \to \(q\)^n$ as $\functions(n,q)$. We denote by ${\Sym(n,q)}$ the set of ${f\in\functions(n,q)}$ which are bijective and by ${\Sing(n,q)}$ the set of ${f\in\functions(n,q)}$ which are non-bijective. For any set $F$ of functions in $\functions(n, q)$, what they can simulate (asynchronously, sequentially, synchronously) is denoted as follows:
\begin{align*}
	\asynchronousSemigroup{F} &:= \semigroup{ \left\{ f^{(V)} : f \in F, V \subseteq \coords{n} \right\} },\\
	\sequentialSemigroup{F} &:= \semigroup{ \left\{ f^{(v)} : f \in F, v \in \coords{n} \right\} },\\
	\parallelSemigroup{F} &= \semigroup{F}.
\end{align*}
Then we say that $F$ \BF{simulates} $g \in \functions(n,q)$ asynchronously (sequentially, synchronously, respectively) if $g \in \asynchronousSemigroup{F}$ ($\sequentialSemigroup{F}$, $\parallelSemigroup{F}$, respectively). When ${F=\{f\}}$ we use notations $\asynchronousSemigroup{f}$, $\sequentialSemigroup{f}$, $\parallelSemigroup{f}$, respectively.

\paragraph{Previous works.} Simulation of automata networks is the topic of two main strands of work. The first stream investigates what a single network can simulate. The main observation, made in \cite{BCG19}, is that there is no sequentially complete network for $\functions(n,q)$, i.e. for all $f \in \functions(n,q)$, $\sequentialSemigroup{f} \ne \functions(n,q)$. This was refined in several ways. Firstly, there is no sequentially complete network for singular (\textit{i.e.} non-permutation) transformations: for all $f \in \functions(n,q)$, $\Sing(n,q) \not\le \sequentialSemigroup{f}$ \cite{BCG19}. Secondly, for all $n \ge 2$ and $q \ge 2$ (unless $n=q=2$), there exists a sequentially complete network for permutations: there exists $f \in \functions(n,q)$ such that $\sequentialSemigroup{f} = \Sym(n,q)$ \cite{CFG14}. These results illustrate a clear dichotomy between permutations and non-permutations. Thirdly, the simulation model was extended in \cite{BCG19} to include situations whereby a large network $f \in \functions(m,q)$ could simulate a smaller network $g \in \functions(n,q)$ for $n \le m$; notably, there always exists a complete network of size $m = n+1$ which can sequentially simulate any $g \in \functions(n,q)$. 

Another strand of work considers simulation by (possibly large) sets of networks. Firstly, Tchuente \cite{Tch86} investigated what networks with a prescribed reflexive interaction graph $D$ could simulate synchronously. The main result is that this set of networks $\functions(D,q)$ is complete, i.e. $\parallelSemigroup{ \functions(D,q) } = \functions(n,q)$, if and only if $D$ is strongly connected and has a vertex of in-degree $n$. Secondly, in the context of in-situ computation (a.k.a. memoryless computation), Burckel proved that any network could be sequentially simulated, if we allow the updates to differ at each time step; in our language: for all $n$ and $q$, $\sequentialSemigroup{ \functions(n, q) } = \functions(n,q)$ \cite{Bur96}. This seminal result was subsequently refined (see \cite{BGT14,GR15}); notably linear bounds on the shortest word required to simulate a transformation were obtained in \cite{BGT09,BGT14}.

\paragraph{Our contributions.} In this paper, we are further developing both strands of the theory of simulation of automata networks. We make the following contributions. We first consider simulation by a single network. Firstly, we show that for any $q \ge 3$ and any $n \ge 2$, there exists a network $g \in \functions(n,q)$ which is not sequentially simulatable. Secondly, we consider asynchronous simulation, and we show that there is no asynchronously complete network: for all $f \in \functions(n,q)$, $\Sing(n,q) \not\le \asynchronousSemigroup{f}$. This is a clear strengthening of the result in \cite{BCG19} for sequential simulation. Thirdly, we extend the framework to let a network over a large alphabet $f \in \functions(n,q')$ simulate a network $g \in \functions(n,q)$ over a smaller alphabet. We consider two ways to extend the alphabet, and for each we prove the existence of sequentially complete networks for $q' = 2q$ and $q' = q+1$, respectively. We then consider simulation by large sets of networks. The seminal result in \cite{Bur96} shows that instructions (updates of the form $f^{(v)}$ for some $v \in [1,n]$) can simulate any network; in this paper, we determine what singular instructions can simulate (and even idempotent instructions for $q \ge 3$). We finally strengthen the main result in \cite{Tch86} by showing that it also holds when considering sequential and asynchronous updates as well.

\paragraph{Proofs.} Complete proofs of all lemmas, propositions and theorems can be found in \cite{DBLP:journals/corr/abs-2001-09198}.

\section{Sequential simulation}
\label{sec:seq}

We say $g \in \functions(n,q)$ is \textbf{sequentially simulatable} if $g \in \sequentialSemigroup{f}$ for some $f \in \functions(n,q)$. Recall that unless $n=q=2$ any ${g\in\Sym(n,q)}$ is sequentially simulatable since there is a universal $f \in \functions(n,q)$ such that $\sequentialSemigroup{f} = \Sym(n,q)$ \cite{CFG14}. Concerning non-bijective maps, the situation is radically different for non-Boolean alphabets as shown in the following theorem.
\newcommand\orphans[1]{O(#1)}
For any function $\phi\in\functions(n,q)$, we denote by $\orphans(\phi)$ the set of its orphans: ${\orphans(\phi)=\{c \in \(q\)^n : \phi^{-1}(c)=\emptyset\}}$. The analysis of oprhans configurations under sequential updates is the key behind the following theorem.

\begin{theorem} \label{thm:nonboolean}
For any $n \ge 2$ and $q \ge 3$, there exists $h \in \functions(n,q)$ which is not sequentially simulatable.
\end{theorem}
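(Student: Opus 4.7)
The plan is to exhibit an explicit $h \in \functions(n,q)$ and to show, by contradiction, that $h \notin \sequentialSemigroup{f}$ for every $f \in \functions(n,q)$, with the entire argument pivoting on the structure of orphans under sequential updates.

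First, I would establish a propagation identity for orphans under sequential updates. Writing a hypothetical decomposition $h = f^{(v_k)} \circ \cdots \circ f^{(v_1)}$ and setting $h_j = f^{(v_j)} \circ \cdots \circ f^{(v_1)}$ with $h_0 = \id$, then for every column $L$ in direction $v_{j+1}$, under the canonical identification $L \cong \(q\)$ via the $v_{j+1}$-th coordinate, one has $\image(h_{j+1}) \cap L = f_{v_{j+1}}(\image(h_j) \cap L)$. Two consequences are immediate: an update in direction $v_{j+1}$ cannot empty a column in direction $v_{j+1}$ that was previously non-empty; and, more generally, the intersection of $\image(h)$ with any column in direction $v_k$ is the image under the restriction of $f_{v_k}$ of $\image(h_{k-1})$'s intersection with that column. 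Analogous but more elaborate identities describe how an update in direction $v_{j+1}$ affects the intersection of the image with columns in other directions. Iterating back to $h_0 = \id$, which has $\orphans(h_0) = \emptyset$, every feature of $\orphans(h)$ must be traced to a specific creation step, and the associated constraints ultimately tie down values of the shared component functions $f_v$.

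Next, I would design $h$ whose orphan set forces every candidate sequential decomposition to fail. The base case $n = 2$, $q = 3$ is handled by an explicit construction, and a case analysis on the last-update direction $v_k \in \{1, 2\}$: propagating the constraints from $\orphans(h)$ backward through the decomposition to $h_0 = \id$ yields a system of conditions on the two component functions $f_1, f_2$ that cannot be simultaneously satisfied by any $f \in \functions(2, 3)$. For general $n \ge 2$, $q \ge 3$, the function $h$ is extended to act as the identity outside a $3 \times 3$ sub-grid embedded in $\(q\)^n$ (obtained by choosing three values in each of two coordinates) and to coincide with the base-case construction on that sub-grid; the propagation identity restricted to the sub-grid transfers the obstruction. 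The assumption $q \ge 3$ is essential: for $q = 2$, a column-restriction of any $f_v$ is one of only four functions (identity, swap, or the two constants), offering enough flexibility to realize any target orphan pattern; with a third symbol, however, one can place orphans whose simultaneous realizability in different directions is contradictory via the shared $f_v$'s.

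The main obstacle I expect is the combinatorial design of the base-case function $h$ together with the attendant case analysis: the verification must rule out every choice of word $(v_1, \ldots, v_k)$ and every compatible $f$, which involves finite but delicate bookkeeping where the iterated propagation identity reaches $h_0 = \id$. A clever choice of $h$, with few but strategically placed orphans exploiting the third alphabet symbol, will be required to keep this analysis tractable while still producing a genuine contradiction in every case.
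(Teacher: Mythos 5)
Your overall strategy matches the paper's: both arguments pivot on tracking how the orphan set of $f^{(w_1\cdots w_\ell)}$ evolves along a hypothetical sequential decomposition, and both exploit $q\geq 3$ to place an orphan $d$ that geometrically conflicts with a collision pair $a,b$. In fact the paper's design principle is exactly the kind you gesture at: choose $a,b$ with $h(a)=h(b)$ differing in a single coordinate (say coordinate $2$), and an orphan $d$ with $d_2\notin\{a_2,b_2\}$ — impossible when $q=2$, possible when $q\geq 3$. Your ``propagation identity'' $\image(h_{j+1})\cap L = f_{v_{j+1}}\big(\image(h_j)\cap L\big)$ for columns $L$ in direction $v_{j+1}$ is correct and is implicitly what drives the paper's Claims.

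However, there is a genuine gap: you never derive the contradiction. The propagation identity by itself does not rule anything out; it merely constrains how orphans can appear and move. The paper needs two non-trivial intermediate claims and a careful two-case endgame. Concretely: (i) if $k$ is the first step where coordinate $2$ is updated, then $f^{(w_1\cdots w_{k-1})}$ is a bijection and $f^{(2)}$ already has rank $q^n-1$; (ii) for every $\ell\geq k$, the (unique) orphan $\omega(\ell)$ of $f^{(w_1\cdots w_\ell)}$ satisfies $\omega(\ell)_2 = d_2 = 2$. Claim (ii) is precisely where the ``one coordinate of the orphan is locked'' phenomenon is established, and it requires a separate argument distinguishing whether $f^{(j)}$ restricted to a column is bijective or not — this is not an immediate corollary of the propagation identity. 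Even granted (i) and (ii), the contradiction still requires a case split on whether $z'=f^{(2)}(y')$ has a second preimage under $f^{(2)}$, and the second case needs an orbit argument under $f^{(2)}$ to conclude that coordinate $2$ is updated exactly once, followed by a pigeonhole argument on $a',b',c',d'$. Your proposal explicitly defers all of this (``the main obstacle I expect is the combinatorial design of the base-case function $h$ together with the attendant case analysis''), so at present you have the right scaffolding but not a proof.

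Two smaller points. First, your base-case-plus-embedding plan is unnecessary: the paper constructs $h$ for all $n\geq 2$, $q\geq 3$ at once, because $a,b,c,d$ already sit in a two-dimensional sub-grid. Second, your stated reason that $q=2$ fails (``a column-restriction of any $f_v$ is one of only four functions, offering enough flexibility'') is not the mechanism at play; the actual obstruction is geometric: if $a_i\neq b_i$ over $\{0,1\}$ then any $d_i\in\{a_i,b_i\}$, so the invariant you would need to contradict cannot even be set up.
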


The functions which are not sequentially simulatable produced in the proof of Theorem~\ref{thm:nonboolean} have two configurations $a$ and $b$ in ${\(q\)^n}$ with the same image and another $d$ which is an orphan with the following property: for each coordinate $i$ where $a_i$ and $b_i$ differ, $d_i$ is different from both $a_i$ and $b_i$. Note that this situation is impossible in the Boolean case since if ${a_i\neq b_i}$ then necessarily ${d_i\in\{a_i,b_i\}}$.

F. Bridoux did an exhaustive search in $\functions(n,2)$ with $n=2$ and $n=3$ to test which one are sequentially simulatable \cite{these_florian}. It turns out that all $f\in\functions(3,2)$ are sequentially simulatable.
However, some functions in $\functions(2,2)$ are not and one example is the circular permutation ${00\rightarrow 01\rightarrow 11\rightarrow 10\rightarrow 00}$ \cite[Proposition 12]{these_florian}.
More details (including the code of the test program) are available at \url{http://theyssier.org/san2020}.

\section{Asynchronous simulation}
\label{sec:async}

In this section, we consider asynchronous simulation, where at each step we allow any update $f^{(T)}$ for $T \subseteq [1,n]$. We then refine the result in \cite{BCG19} that there is no network that can sequentially simulate all singular networks. 

We say that a function $h : B \to C$, where $B$ and $C$ are finite sets, is balanced if for any $c, c' \in C$, $|h^{-1}(c)| = |h^{-1}(c')|$. In particular, if $f \in \functions(n,q)$ is bijective, then all its coordinate functions $f_v : \(q\)^n \to \(q\)$ must be balanced.

\begin{theorem}\label{thm:noasyncsing}
For all $f \in \functions(n,q)$, $\Sing(n,q) \not\le \asynchronousSemigroup{f}$.
\end{theorem}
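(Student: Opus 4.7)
The plan is to pinpoint a structural invariant of $\asynchronousSemigroup{f}$ that fails for some singular function. The key observation is an invariance lemma: every $a \in \Fix(f)$ is a common fixed point of the whole semigroup. Indeed, for any such $a$ and any $V \subseteq \coords{n}$,
\[
f^{(V)}(a) = (a_{V^c}, f(a)_V) = (a_{V^c}, a_V) = a,
\]
so every generator of $\asynchronousSemigroup{f}$ fixes $a$, and by induction on the length of a defining composition, so does every $g \in \asynchronousSemigroup{f}$.

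With the lemma in hand, the case $\Fix(f) \neq \emptyset$ is immediate. Choose any $a^* \in \Fix(f)$ and any $b \in \(q\)^n$ with $b \neq a^*$, which exists since $q^n \geq 2$. The constant map $c_b$ (sending every input to $b$) lies in $\Sing(n,q)$ but does not fix $a^*$, so the lemma forces $c_b \notin \asynchronousSemigroup{f}$, establishing $\Sing(n,q) \not\le \asynchronousSemigroup{f}$. Note that this argument covers, with no extra work, the subcase $|\Fix(f)| = 1$: even if a single constant---the one at the unique fixed point---were simulatable, the lemma still excludes $c_b$ for every other $b$.

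The hard part, and the main obstacle I foresee, is the residual case $\Fix(f) = \emptyset$, in which the invariance lemma is vacuous. I would look there for a different obstruction. A natural candidate is a fixed-point-free permutation $\sigma$ of $\(q\)^n$ that commutes with every generator $f^{(V)}$; by closure of centralisers under composition, $\sigma$ then commutes with every $g \in \asynchronousSemigroup{f}$, and any constant $c_b$ in the semigroup would force $\sigma(b) = b$, an impossibility. Such a $\sigma$ can often be read off from natural symmetries of $f$ (for instance a coordinate-wise bit-flip when $f$ is antisymmetric), but no single recipe works for every fixed-point-free $f$, which is precisely the technical challenge. When $f$ is singular, an alternative handle is provided by the non-empty set of orphans of $f$: any composition realising a constant $c_b$ with $b$ an orphan must avoid $V_k = \coords{n}$ as its last update, and a careful image-size analysis tying successive applications of $f^{(V)}$ to orphan-preservation should rule out the remaining possibilities.
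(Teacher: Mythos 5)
Your invariance lemma is correct and cleanly handles the case $\Fix(f)\neq\emptyset$: if $f(a)=a$ then $f^{(V)}(a)=(a_{V^c},f(a)_V)=a$ for every $V$, so every element of $\asynchronousSemigroup{f}$ fixes $a$, and no constant map $c_b$ with $b\neq a$ can lie in the semigroup. That much is a valid (and pleasantly elementary) partial argument.

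The problem is that you have only proved the theorem for those $f$ having at least one fixed point, and you explicitly concede the case $\Fix(f)=\emptyset$. This is not a corner case: every derangement of $\(q\)^n$, and a large fraction of singular maps, have no fixed points, so the residual case is as large as the one you settled. Neither of your two suggestions for it is a proof. The centraliser idea asks for a fixed-point-free permutation $\sigma$ commuting with \emph{every} $f^{(V)}$; commuting with all single-coordinate updates is already an extremely rigid condition (essentially $\sigma$ must be a coordinate-respecting symmetry of $f$), and a generic fixed-point-free $f$ admits no such $\sigma$, as you yourself note. The orphan sketch (``a careful image-size analysis \dots should rule out the remaining possibilities'') gestures at the right phenomenon but contains no actual argument, so the proof has a genuine gap.

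For comparison, the paper's proof does not split on $\Fix(f)$ at all. It observes that if the semigroup contained a map of rank $q^n-1$ then some $f^{(S)}$ would have rank $q^n-1$, and the two special configurations $a$ (doubly covered) and $b$ (orphan) of $f^{(S)}$ force some coordinate function $f_v$ with $v\in S$ and $a_v\neq b_v$ to satisfy $|f_v^{-1}(a_v)|=q^{n-1}+1$, hence $f_v$ is not balanced. From a non-balanced $f_v$, pick $q_0$ with $|f_v^{-1}(q_0)|<q^{n-1}$; then any update $f^{(w_i)}$ with $v\in w_i$ must leave an orphan with $v$-coordinate $q_0$ (pigeonhole on the fibre sizes), and since later blocks not containing $v$ fix coordinate $v$, any composition $f^{(w_1\cdots w_k)}$ in which $v$ is ever updated keeps such an orphan. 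This rules out, in one stroke, every singular $g$ that both depends on $v$ and has no orphan with $v$-coordinate $q_0$, regardless of whether $f$ has fixed points. That is the mechanism your argument is missing; without something playing its role, the case $\Fix(f)=\emptyset$ remains open.
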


\begin{proof}
Suppose, for the sake of contradiction, that $\Sing(n,q) \le \asynchronousSemigroup{f}$. We first show that not all coordinate functions of $f$ are balanced. There exists $S \subseteq [1,n]$ such that $f^{(S)}$ has rank $q^n - 1$. (Otherwise, no function in $\asynchronousSemigroup{f}$ has rank $q^n-1$.) Then there exist $a, b \in \(q\)^n$ such that
\[
	\left| \left( f^{(S)} \right)^{-1}(x) \right| = \begin{cases}
	2 & \text{if } x = a\\
	0 & \text{if } x = b\\
	1 & \text{otherwise}.
	\end{cases}
\]
Then let $v \in S$ such that $a_v \ne b_v$. We have 
\[
	|f_v^{-1}(a_v)| = \sum_{x : x_v = a_v} \left| \left( f^{(S)} \right)^{-1}(x) \right| = 2 + \sum_{x : x_v = a_v, x \ne a} 1 = q^{n-1} + 1,
\]
thus $f_v$ is not balanced.

Thus, suppose $f_v$ is not balanced, and let $q_0 \in \(q\)$ such that $|f_v^{-1}(q_0)| < q^{n - 1}$. Say a network $h \in \functions(n,q)$ is defective if $h^{-1}(x) = \emptyset$ for some $x$ with $x_v = q_0$. Let $g \in \Sing(n,q)$ not be deficient, and have a nontrivial $g_v$; and suppose $g = f^{(w_1 \cdots w_k)}$. Let $i = \max\{ 1 \le j \le k : v \in w_j \}$, then $f^{(w_i)}$ is defective, and so is $f^{(w_1 \cdots w_i)}$. Since $f^{(w_{i+1} \cdots w_k)}$ fixes the coordinate $v$, $f^{(w_1 \cdots w_k)} = g$ is also deficient, which is the desired contradiction.\qed
\end{proof}

Similarly to Theorem \ref{thm:nonboolean}, the obstacle in Theorem \ref{thm:noasyncsing} was found in the set of maps of rank $q^n-1$. We now show that maps of rank $q^n-2$ form another obstruction to having complete simulation in the asynchronous case. Let $T(n,q)$ be the set of networks in $\functions(n,q)$ whose rank is not equal to $q^n - 1$. It is clear that $T(n,q)$ is a semigroup, generated by maps of rank $q^n$ or $q^n - 2$.

\begin{proposition}
For all $f \in \functions(n,q)$, $T(n,q) \not\le \asynchronousSemigroup{f}$.
\end{proposition}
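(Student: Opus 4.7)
The plan is to argue by contradiction from $T(n,q) \le \asynchronousSemigroup{f}$, ultimately exhibiting a concrete $g \in T(n,q)$ that cannot lie in $\asynchronousSemigroup{f}$. Because composition cannot raise the rank, the maximum non-bijective rank
$r := \max\{\rank(f^{(S)}) : f^{(S)} \notin \Sym(n,q)\}$
is the largest non-$q^n$ rank appearing in $\asynchronousSemigroup{f}$; since $T(n,q)$ contains maps of rank $q^n - 2$, we have $r \in \{q^n - 1,\, q^n - 2\}$.

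I would first dispose of the case $r = q^n - 1$ by recycling Theorem~\ref{thm:noasyncsing}'s argument verbatim. A generator $f^{(S)}$ of rank $q^n - 1$ produces a coordinate $v$ and a value $q_0 \in \(q\)$ with $|f_v^{-1}(q_0)| < q^{n-1}$, and as in that proof any $g \in \asynchronousSemigroup{f}$ whose $g_v$ is nontrivial has an orphan with $v$-coordinate $q_0$. The improvement over Theorem~\ref{thm:noasyncsing} is that $\Sym(n,q) \subseteq T(n,q) \subseteq \asynchronousSemigroup{f}$, so I can take $g$ to be the cyclic shift defined by $g(x)_i = x_i$ for $i \ne v$ and $g(x)_v = x_v + 1 \bmod q$, which is a bijection and hence has no orphans, contradicting deficiency.

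The remaining case $r = q^n - 2$ is where the work concentrates: no $f^{(S)}$ has rank $q^n - 1$, so $\asynchronousSemigroup{f} \subseteq T(n,q)$ and the assumption forces $\asynchronousSemigroup{f} = T(n,q)$. Fixing a generator $f^{(S)}$ of rank $q^n - 2$, I would analyse its structure: its two orphans $a_1, a_2$ either lie in a common slice $\{x : x_{\ne S} = c\}$ or in two distinct slices, and its non-trivial fibre pattern is either a single ``triple'' or ``two doubles''. A fibre count then shows that in most configurations some $f_v$ with $v \in S$ is unbalanced (say $|f_v^{-1}(q_0)| < q^{n-1}$), which returns us to the argument of the previous case with a bijective $g$. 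The main obstacle is the sub-case in which the doubles and orphans align on every coordinate of $S$ so that all $f_v$ stay balanced---this really occurs, e.g.\ for $f(x_1,x_2) = (x_1 + x_2,\, x_1 + x_2)$ over $\GF(2)$. There I plan to strengthen the deficiency invariant to a pair of coordinates: identify $(v_1, v_2) \in S^2$ and $(q_1, q_2) \in \(q\)^2$ with $|\{y : f_{v_1}(y) = q_1,\; f_{v_2}(y) = q_2\}| < q^{n-2}$, then run a max-index argument on the last $w_j$ containing both $v_1$ and $v_2$ to force a joint orphan of $g$ at $(q_1, q_2)$, contradicted by choosing $g$ to be a jointly balanced bijection in $\Sym(n,q)$. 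The technical bulk lies in handling compositions that update $v_1$ and $v_2$ in distinct steps but never jointly, which likely demands a further refinement of the invariant or additional case analysis.
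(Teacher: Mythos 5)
Your first case ($r=q^n-1$) is sound, and your observation that the contradicting map $g$ can be taken to be a bijection (a cyclic shift on coordinate $v$) is a clean simplification of the deficiency argument of Theorem~\ref{thm:noasyncsing}. The second case, however, has a genuine gap, and it concentrates precisely where you hedge. You plan a fibre count showing that ``in most configurations'' some $f_v$ with $v\in S$ is unbalanced, reserving a 2-coordinate refinement for the aligned sub-case. But the hypothesis $T(n,q)\le\asynchronousSemigroup{f}$ already forces \emph{every} $f_v$ to be balanced: the bijection $g(x)=x+(1,\dots,1)$ lies in $T(n,q)$, so writing $g=f^{(w_1\cdots w_k)}$ makes each factor $f^{(w_i)}$ bijective, hence $f_v$ balanced for all $v\in w_i$, and $\bigcup_i w_i=[1,n]$ since $g$ moves every coordinate. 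So your ``main case'' is empty, and you are always in the aligned sub-case you flag as problematic.

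In that sub-case the 2-coordinate max-index argument does not close. If $v_1$ and $v_2$ never appear together in a $w_j$, say $i_1=\max\{j:v_1\in w_j\}<i_2=\max\{j:v_2\in w_j\}$, then between steps $i_1$ and $i_2$ the $v_2$-coordinate is overwritten while $v_1$ is frozen; the joint deficiency of $(g_{v_1},g_{v_2})$ at $(q_1,q_2)$ established at step $i_1$ need not survive this overwrite, since $f_{v_2}$ is itself balanced and can repopulate the slice $\{z:z_{v_1}=q_1,\ z_{v_2}=q_2\}$. This is exactly the scenario you identify as ``likely demanding a further refinement,'' and there is no obvious invariant to carry across the interval $(i_1,i_2]$. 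The paper sidesteps the issue entirely by a structural argument: after establishing balance of all $f_v$, it splits rank-$(q^n-2)$ maps into type~I (one triple fibre) and type~II (two double fibres), shows balance rules out any generator $f^{(S)}$ of type~I, and then observes that composing a type~II map with further maps either drops the rank below $q^n-2$ or stays type~II --- so a type~I map in $T(n,q)$ can never be reached. That argument tracks fibre multiplicities of whole maps rather than per-coordinate orphan slices, which is why it never needs to synchronize two coordinates. You would need either to find that observation yourself or to devise an invariant that genuinely survives staggered updates of $v_1$ and $v_2$; as written, the proposal does not establish the second case.
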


\begin{proof}
Suppose, for the sake of contradiction, that $T(n,q) \le \asynchronousSemigroup{f}$. Firstly, all the coordinate functions of $f$ are balanced. Indeed, let $g(x) = x + (1, \dots, 1)$ and express $g = f^{(w_1 \cdots w_k)}$. Then $f^{(w_i)}$ is bijective and hence $f_v$ is balanced for all $v \in w_i$; since $\bigcup_{i=1}^k w_i = [1,n]$, we obtain that $f_v$ is balanced for all $v \in [1,n]$. Secondly, the proof of Theorem \ref{thm:noasyncsing} showed that there is no $f^{(S)}$ of rank $q^n-1$.

Now, there are two types of networks with rank $q^n - 2$:
\begin{itemize} 
	\item 
	Say $g$ is of type I if there exists $a \in \(q\)^n$ such that $|g^{-1}(a)| = 3$ (and hence any other $x \ne a$ has $|g^{-1}(x)| \le 1$). 

	\item 
	Say $h$ is of type II if there exist $a, b \in \(q\)^n$ such that $|h^{-1}(a)| = |h^{-1}(b)| = 2$ (and hence any other $x \notin \{a,b\}$ has $|h^{-1}(x)| \le 1$).
\end{itemize}
By an argument similar to the proof of Theorem \ref{thm:noasyncsing}, there is no $S \subseteq [1,n]$ such that $f^{(S)}$ is of type I. Let $g$ be of type I and let us express it as $g = f^{(w_1 \cdots w_k)}$. Each $f^{(w_l)}$ has rank at least $q^n - 2$, and there exists $1 \le i \le k$ such that $f^{(w_i)}$ is singular. By the argument above, $f^{(w_i)}$ is of type II and so is $h := f^{(w_1 \cdots w_i)}$, say $|h^{-1}(a)| = |h^{-1}(b)| = 2$. Denote $g = h' \circ h$ for $h' := f^{(w_{i+1} \cdots w_k)}$. If $h'(a) = h'(b)$, then $g$ has rank at most $q^n - 3$; otherwise $|g^{-1}(h'(a))| = |g^{-1}(h'(b))| = 2$ and hence $g$ is of type II, which is the desired contradiction.\qed
\end{proof}

\section{Simulation using larger alphabets}
\label{sec:largeralphabet}

\newcommand{\1}{{\normalfont \texttt{1}}}
\newcommand{\0}{{\normalfont \texttt{0}}}

As said earlier, there is no universal automata network in $\functions(n,q)$ able to sequentially simulate all functions of $\functions(n,q)$ (actually Theorem~\ref{thm:noasyncsing} gives a stronger negative result). In this section, we revisit this problem when the simulator is allowed to use a larger alphabet. In this case we can consider two natural types of simulations: one requires the simulation to work on any initial configuration of the simulator and uses a projection onto configurations of the simulated functions; the other does not use projection, but works only on initial configurations using the alphabet of the simulated function.

\begin{definition}
  Let $n\in\N$, ${2\leq q<q'}$ and consider ${f\in\functions(n,q')}$. We say that $f$ is \emph{($n$,$q$)-universal by factor} if there is a surjection ${\pi:\(q'\)\rightarrow\(q\)}$ such that for any $h \in \functions(n,q)$ there is a word $w \in \coords{n}^\ast$ such that 
  \[\forall x\in\(q'\)^n, \overline\pi \circ f^{(w)}(x) = h\circ \overline\pi(x)\]
  where ${\overline\pi(x_1,\ldots,x_n)=(\pi(x_1),\ldots,\pi(x_n))}$.  $f$ is said \emph{($n$,$q$)-universal by initialization} if for any $h\in\functions(n,q)$ there is a word $w\in\coords{n}^\ast$ such that
   \[ \forall x\in \(q\)^ n,  f^{(w)} (x)= h(x).\]
\end{definition}

We are going to show that universality can be achieved for each kind of simulation. In both cases, the larger alphabet allows us to encode more information than the configuration of the simulated function. This additional information is used as a global controlling state that commands transformations applied on the simulated configuration and evolves according to a finite automaton. In the case of simulation by factor, the encoding is straightforward but the global controlling state is uninitialized. The key is to use a control automaton with a synchronizing word (see Figure~\ref{fig_projected_complete}). In the case of simulation by initialization, the difficulty lies in the encoding.

The following theorems were obtained by F. Bridoux during his PhD thesis \cite{these_florian}.

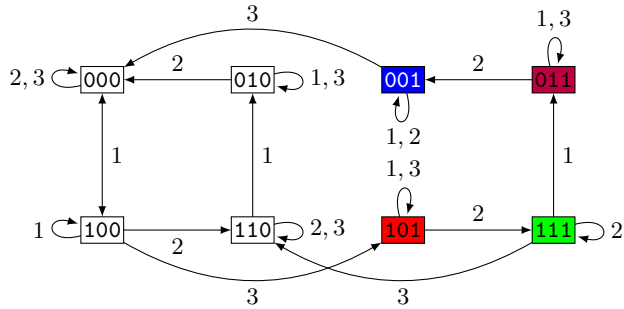
\begin{figure}
	\centering
	\begin{tikzpicture}[scale=1]
	\tikzstyle{grossefleche} = [-{>[length=1.5mm]}]
	\tikzstyle{config} = [draw,outer sep=0,inner sep=1,minimum size=10,fill=white]
	\tikzstyle{configOrange} = [draw,outer sep=0,inner sep=1,minimum size=10,fill=orange]
	\node[config]    (vooo) at (-3,2) {$\0\0\0$};
	\node[config]       (voio) at (-1,2) {$\0\1\0$};
	\node[config]       (vioo) at (-3,0) {$\1\0\0$};
	\node[config]       (viio) at (-1,0) {$\1\1\0$};
	\node[config,fill=blue,text=white]    	(vooi) at (1,2) {$\0\0\1$};
	\node[config,fill=purple]       (voii) at (3,2)  {$\0\1\1$};
	\node[config,fill=red]       (vioi) at (1,0)  {$\1\0\1$};
	\node[config,fill=green]       (viii) at (3,0)  {$\1\1\1$};
	
	\draw[grossefleche] (vooo) edge[loop left] node[left]{$2,3$}  (vooo);
	\draw[grossefleche] (voio) edge[loop right] node[right]{$1,3$}  (voio);	
	\draw[grossefleche] (vioo) edge[loop left] node[left]{$1$}  (vioo);	
	\draw[grossefleche] (viio) edge[loop right] node[right]{$2,3$}  (viio);	
	
	\draw[grossefleche] (vooi) edge[loop below] node[below]{$1,2$}  (vooi);
	\draw[grossefleche] (voii) edge[loop above] node[above]{$1,3$}  (voii);
	\draw[grossefleche] (vioi) edge[loop above] node[above]{$1,3$}  (vioo);		
	\draw[grossefleche] (viii) edge[loop right] node[right]{$2$}  (viii);
		
	\draw[grossefleche] (voio) edge[] node[above]{$2$}  (vooo);
	\draw[grossefleche,{<[length=1.5mm]}-{>[length=1.5mm]}] (vioo) edge[] node[right]{$1$}  (vooo);
	\draw[grossefleche] (vioo) edge[] node[below]{$2$}  (viio);
	\draw[grossefleche] (viio) edge[] node[right]{$1$}  (voio);
	\draw[grossefleche] (vioi) edge[] node[above]{$2$}  (viii);
	\draw[grossefleche] (viii) edge[] node[right]{$1$}  (voii);
	\draw[grossefleche] (voii) edge[] node[above]{$2$}  (vooi);
	
	\draw[grossefleche] (vooi) edge[bend right] node[above]{$3$}  (vooo);
	\draw[grossefleche] (viii) edge[bend left] node[below]{$3$}  (viio);
	\draw[grossefleche] (vioo) edge[bend right] node[below]{$3$}  (vioi);
	\end{tikzpicture}
	
	\caption{Definition and sequential behavior of $\rho: \(2\)^3 \to \(2\)^3$ from Theorem~\ref{theorem:complet_par_facteur}. Label on arcs represent the coordinate updated.}
	\label{fig_projected_complete}
\end{figure}

\begin{theorem} \label{theorem:complet_par_facteur}
  For any $q \geq 2$ and $n \geq 3$, there exists $f \in \functions(n,2q)$ which is $(n,q)$-universal by factor.
\end{theorem}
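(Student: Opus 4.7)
The plan is to interpret the alphabet $\(2q\)$ as the product $\(q\)\times\(2\)$, taking $\pi:\(2q\)\to\(q\)$ to be the projection on the first factor. A configuration $x\in\(2q\)^n$ then splits as a pair $(a,b)\in\(q\)^n\times\(2\)^n$ with $\overline\pi(x)=a$; call $a$ the \emph{data} and $b$ the \emph{control}. I build $f$ so that $f^{(v)}$ changes the control bit $b_v$ by a rule $\beta_v$ depending only on $b$, and changes the data bit $a_v$ by a rule $\alpha_v$ depending on $(a,b)$. Under sequential updates the control part thus evolves as a deterministic automaton $\mathcal{A}$ on $\(2\)^n$ with input alphabet $\coords{n}$, independently of the data.

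The first ingredient is that $\mathcal{A}$ is synchronizing: there exist a word $w_0\in\coords{n}^\ast$ and a home state $b^\star\in\(2\)^n$ (say $b^\star=\mathbf{0}$) such that, starting from any initial control configuration, reading $w_0$ in $\mathcal{A}$ ends at $b^\star$. The figure realizes such an $\mathcal{A}$ for $n=3$ with $b^\star=\mathbf{0}$; the template extends to any $n\geq 3$ by an elementary "reset" scheme on the control bits. At the same time I arrange, for every control state $b$ that can appear along some synchronizing run, that the data rule is the identity $\alpha_v(a,b)=a_v$. This guarantees that applying $w_0$ to any initial $(a,b)$ yields $(a,b^\star)$: the data is preserved while the control is reset to the home state, regardless of how the control was initialized.

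The second ingredient is that, from $b^\star$, the set of words $w\in\coords{n}^\ast$ whose action in $\mathcal{A}$ is a cycle $b^\star\to b^\star$ induces, on the data component, a semigroup covering all of $\functions(n,q)$. By Burckel's theorem \cite{Bur96} we have $\sequentialSemigroup{\functions(n,q)}=\functions(n,q)$, so it suffices to realize every instruction (i.e. every map altering a single coordinate) on the data. The rules $\alpha_v(\cdot,b)$ at intermediate control states $b\neq b^\star$ are designed so that for each instruction $\phi$ there is at least one returning word $w_\phi$ whose composed data action equals $\phi$. Given a decomposition $h=\phi_k\circ\cdots\circ\phi_1$ supplied by Burckel's theorem, the concatenation $w=w_0\cdot w_{\phi_1}\cdots w_{\phi_k}$ then satisfies $\overline\pi\circ f^{(w)}(x)=h\circ\overline\pi(x)$ for every $x\in\(2q\)^n$, which is exactly what the definition of $(n,q)$-universality by factor demands.

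The main obstacle is the simultaneous engineering of $\mathcal{A}$ and of the rules $\alpha_v$ so that all three properties hold on a single $f$: (i) $\mathcal{A}$ is synchronizing, (ii) the data is preserved along every synchronizing run, and (iii) from $b^\star$ there are enough returning words to realize every instruction on the data. The hypothesis $n\geq 3$ enters precisely here, giving enough control bits both to carry out the reset and to leave room for programming a sufficiently rich set of data gadgets; the figure indicates the explicit construction in the minimal case $(n,q)=(3,2)$, and the general case proceeds by adapting the same pattern, using additional control bits as a program counter as $n$ grows.
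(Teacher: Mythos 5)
Your high-level architecture matches the paper's: split $\(2q\) \cong \(q\)\times\(2\)$ into data and control, make the control evolve by a finite automaton driven by the coordinate updates, synchronize the control to a home state while preserving the data, and then from the home state use returning words whose data actions generate $\functions(n,q)$. The paper likewise uses a synchronizing control automaton (on only $\(2\)^3$, leaving the other $n-3$ control bits inert) with the word $((3)^q,2,3,1,1,2,1,3)$ and home state $\1\0\1$.

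Where your proposal has a real gap is the second ingredient. You reduce via Burckel's theorem to the statement that returning words should realize \emph{every} instruction, and then assert that the gadget rules $\alpha_v(\cdot,b)$ ``are designed so that'' this holds — but this is exactly the content of the theorem and is left entirely unconstructed. Moreover the framing is heavier than needed: there are only finitely many gadget rules (one per pair $(v,b)$, at most $n\cdot 2^n$ of them), and no returning word can realize an arbitrary instruction directly — what you actually need is that the returning-word data actions include a \emph{generating set} of $\functions(n,q)$. The paper uses the classical three-element generating set $\{c,k,d\}$ (a long cycle, the transposition $((\0)^n\leftrightarrow\1(\0)^{n-1})$, and the assignment $((\0)^n\to\1(\0)^{n-1})$) and shows explicitly that three short returning words from $\1\0\1$ produce them; invoking Burckel is an unnecessary detour here.

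There is also a concrete tension in your first ingredient that you do not resolve. You ask that the data rule be the identity at every control state appearing along a synchronizing run, while nontrivial rules must be available at some of these same states (or at states reachable from the home state, which overlap) for the returning words to do anything. The paper does \emph{not} make the data rules identity along synchronization; instead it lets a nontrivial rule fire (coordinate $3$ at state $\0\1\1$, which does $x_3 \gets x_3+1 \bmod q$) but arranges the synchronizing word to apply it exactly $q$ times so it cancels. Your sketch gives no mechanism to reconcile ``data untouched during synchronization'' with ``data transformed during returning words'' on a single, fixed $f$, and this reconciliation is precisely where the construction earns its keep.
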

\begin{proof}
  We can see any configuration of ${\(2q\)^n}$ as a pair made of a configuration of $\(q\)^n$ and a Boolean configuration, so we can as well describe $f$ as a function acting on ${\(q\)^n\times \(2\)^n}$ to simplify notations and use the surjective map ${\pi : \(q\)^n\times\(2\)^n\rightarrow\(q\)^n}$ that projects onto the first component. We will actually choose $f$ which is the identity map on the coordinates $4$ to $n$ on the Boolean component. So, to simplify even further, we will define a function ${f:\(q\)^n\times\(2\)^3\rightarrow\(q\)^n\times\(2\)^3}$.
  
  Consider first the function $\rho:\(2\)^3 \to \(2\)^3$ defined by Figure~\ref{fig_projected_complete} and consider the map $\Psi: \(q\)^n\times\(2\)^3 \to \(q\)^n$ defined by:
  \begin{align*}
    \Psi(x,y)_1 &= \begin{cases}
      x_1 + \1 \bmod q & \text{if } y = \color{red} \1\0\1 \color{black},\\
      \1 & \text{if }  x= (\0)^n \text{ and }  (y =  \color{purple}  \0\1\1 \color{black} \text{ or } y = \color{blue}  \0\0\1 \color{black}),\\
      \0 & \text{if }  x= \1(\0)^{n-1} \text{ and }  y = \color{purple}  \0\1\1 \color{black},\\
      x_1 & \text{otherwise},
    \end{cases}\\
    \Psi(x,y)_2 &= \begin{cases}
      x_2 + \1 \bmod q & \text{if } x_1 = \0 \text{ and }  y =  \color{green} \1\1\1 \color{black},\\
      x_2 & \text{otherwise},
    \end{cases}\\
    \Psi(x,y)_3 &= \begin{cases}
      x_3 + \1 \bmod q & \text{if } x_1 = x_2 = \0 \text{ and }  y =  \color{purple}  \0\1\1 \color{black},\\
      x_3 & \text{otherwise},
    \end{cases}\\
    \forall i \in [4,n],\ \Psi(x,y)_i &= \begin{cases}
      x_i + \1 \bmod q & \text{if } x_1 = x_2 = \dots = x_{i-1} = \0,\\
      x_i & \text{otherwise}.
    \end{cases}
  \end{align*}
	
  Then we define $f$ by ${f(x,y) = \bigl(\Psi(x,y),\rho(y)\bigr)}$. We
  now prove properties about $f$ implying that it is $(n,q)$-universal
  by factor.

  \begin{myclaim}\label{claim:forcesecondcomponent}
    For any ${(x,y)\in\(q\)^n\times\(2\)^3}$ it holds $f^{((3)^q,2,3,1,1,2,1,3)}(x,y) = (x,\color{red}\1\0\1\color{black})$.
  \end{myclaim}
  \begin{proof}
    First, let us remark that updating $q$ times coordinate 3 starting from ${(x,y)}$, there are two cases:
    \begin{itemize}
    \item $y \neq \color{purple} \0\1\1 \color{black}$ or  $x_1 \neq \0$ or $x_2 \neq \0$ and then the component $x$ is not modified;
    \item $y = \color{purple} \0\1\1 \color{black}$ and $x_1 = x_2 = \0$, and then the modification $x_3 \gets x_3 +1$ is applied $q$ times.
    \end{itemize}
    Therefore we have ${f^{((3)^q)}(x,y) = (x',y')}$ with \[x'=(x_1,x_2,x_3+q,\dots) = (x_1,x_2,x_3,\dots) = x.\]
    To show that the update sequence $((3)^q,2,3,1,1,2,1,3)$ does not modify the component $x$, it is sufficient to verify the following:
    \begin{itemize}
    \item coordinate 1 is not updated when $y \in \{ \color{red} \1\0\1 \color{black} , \color{purple} \0\1\1 \color{black}, \color{blue} \0\0\1 \color{black} \}$;
    \item coordinate 2 is not updated when $y =  \color{green} \1\1\1 \color{black}$;
    \item when coordinate 3 is updated and $y = \color{purple} \0\1\1 \color{black}$, it is updated $q$ times.
    \end{itemize}
    
    By definition of $f^{((3)^q,2,3,1,1,2,1,3)}$, we obtain:
    \[
      \begin{array}{lllllllllll}
        x\0\0\0 & \xrightarrow{(3)^q} x\0\0\0 &\xrightarrow{2} x\0\0\0 &\xrightarrow{3} x\0\0\0 &\xrightarrow{1} x\1\0\0 &\xrightarrow{1} x\0\0\0
        &\xrightarrow{2} x\0\0\0 &\xrightarrow{1} x\1\0\0 &\xrightarrow{3} x\color{red} \1\0\1 \color{black}, \\
        x\1\0\0 & \xrightarrow{(3)^q} x\color{red} \1\0\1 \color{black} &\xrightarrow{2} x\color{green} \1\1\1 \color{black} &\xrightarrow{3} x\1\1\0 &\xrightarrow{1} x\0\1\0 &\xrightarrow{1} x\0\1\0 &\xrightarrow{2} x\0\0\0 &\xrightarrow{1} x\1\0\0 &\xrightarrow{3} x\color{red} \1\0\1 \color{black},\\
        x\0\1\0 & \xrightarrow{(3)^q} x\0\1\0 &\xrightarrow{2} x\0\0\0 &\xrightarrow{3} x\0\0\0 &\xrightarrow{1} x\1\0\0 &\xrightarrow{1} x\0\0\0
        &\xrightarrow{2} x\0\0\0 &\xrightarrow{1} x\1\0\0 &\xrightarrow{3} x\color{red} \1\0\1 \color{black},\\
        x\1\1\0 & \xrightarrow{(3)^q} x\1\1\0 &\xrightarrow{2} x\1\1\0 &\xrightarrow{3} x\1\1\0 &\xrightarrow{1} x\0\1\0 &\xrightarrow{1} x\0\1\0
        &\xrightarrow{2} x\0\0\0 &\xrightarrow{1} x\1\0\0 &\xrightarrow{3} x\color{red} \1\0\1 \color{black},\\
        x\color{blue} \0\0\1 \color{black}  & \xrightarrow{(3)^q} x\0\0\0 &\xrightarrow{2} x\0\0\0 &\xrightarrow{3} x\0\0\0 &\xrightarrow{1} x\1\0\0 &\xrightarrow{1} x\0\0\0
        &\xrightarrow{2} x\0\0\0 &\xrightarrow{1} x\1\0\0 &\xrightarrow{3} x\color{red} \1\0\1 \color{black},\\
        x\color{purple} \0\1\1 \color{black}  & \xrightarrow{(3)^q} x\color{purple} \0\1\1 \color{black} &\xrightarrow{2} x\color{blue} \0\0\1 \color{black}  &\xrightarrow{3} x\0\0\0 &\xrightarrow{1} x\1\0\0 &\xrightarrow{1} x\0\0\0
        &\xrightarrow{2} x\0\0\0 &\xrightarrow{1} x\1\0\0 &\xrightarrow{3} x\color{red} \1\0\1 \color{black},\\
        x\color{red} \1\0\1 \color{black} & \xrightarrow{(3)^q} x\color{red} \1\0\1 \color{black} &\xrightarrow{2} x\color{green} \1\1\1 \color{black} &\xrightarrow{3} x\1\1\0 &\xrightarrow{1} x\0\1\0 &\xrightarrow{1} x\0\1\0
        &\xrightarrow{2} x\0\0\0 &\xrightarrow{1} x\1\0\0 &\xrightarrow{3} x\color{red} \1\0\1 \color{black},\\
        x\color{green} \1\1\1 \color{black} & \xrightarrow{(3)^q} x\1\1\0 &\xrightarrow{2} x\1\1\0 &\xrightarrow{3} x\1\1\0 &\xrightarrow{1} x\0\1\0 &\xrightarrow{1} x\0\1\0
        &\xrightarrow{2} x\0\0\0 &\xrightarrow{1} x\1\0\0 &\xrightarrow{3} x\color{red} \1\0\1 \color{black}.\\
	
      \end{array}
    \]
    \qed
  \end{proof}
  
  Let us now show that, starting from ${(x,\color{red}\1\0\1\color{black})}$, $f$ can realize three kinds of transformations on $x$ that will turn out to be sufficient to generate all ${\functions(n,q)}$.
  \begin{itemize}
  \item Let $c\in\Sym(n,q)$ be the following circular permutation: 
    \[
      c: ((\0)^n \to \1 (\0)^{n-1} \to \dots \to (q-1) (\0)^{n-1}  \to \0 \1 (\0)^{n-2} \to \dots).
    \]
    then for any ${x\in \(q\)^n}$ we have $f^{(1,2,2,1,(3,4,\dots,n))}(x,\color{red}\1\0\1\color{black}) = (c(x),\color{purple} \0\1\1 \color{black})$ because:
    \[
      \begin{array}{ll}
        x\color{red}\1\0\1\color{black} 
        \xrightarrow{1} c^{(1)}(x) \color{red}\1\0\1\color{black} 
        \xrightarrow{2} c^{(1)}(x) \color{green} \1\1\1 \color{black} 
        \xrightarrow{2} c^{([1,2])}(x) \color{green} \1\1\1 \color{black} 
        \xrightarrow{1} c^{([1,2])}(x) \color{purple} \0\1\1 \color{black} \\
        \xrightarrow{3} c^{([1,3])}(x) \color{purple} \0\1\1 \color{black} 
        \xrightarrow{4} c^{([1,4])}(x) \color{purple} \0\1\1 \color{black} 
        \xrightarrow{5} \dots 
        \xrightarrow{n} c(x) \color{purple} \0\1\1 \color{black}.
      \end{array}
    \]
  \item Consider the transposition $k = ((\0)^n \leftrightarrow \1(\0)^{n-1})$, then we have, for any ${x\in \(q\)^n}$, $f^{(2,1,1)}(x,\color{red}\1\0\1\color{black}) = (k(x),\color{purple} \0\1\1 \color{black})$ because:
    \[
      x \color{red}\1\0\1\color{black} 
      \xrightarrow{2} x \color{green} \1\1\1 \color{black}  
      \xrightarrow{1} x \color{purple} \0\1\1 \color{black}  
      \xrightarrow{1} k(x) \color{purple} \0\1\1 \color{black}.
    \]
  \item Finally, consider the assignment $d = ((\0)^n \to \1(\0)^{n-1})$, then for any ${x\in \(q\)^n}$ it holds $f^{(2,1,2,1)}(x,\color{red}\1\0\1\color{black}) = (d(x),\color{purple} \color{blue}  \0\0\1 \color{black})$ because:
    \[
      \begin{array}{ll}
        x \color{red}\1\0\1\color{black} 
        \xrightarrow{2} x \color{green} \1\1\1 \color{black} 
        \xrightarrow{1} x \color{purple} \0\1\1 \color{black} 
        \xrightarrow{2} x \color{blue}  \0\0\1 \color{black} 
        \xrightarrow{1} d(x) \color{blue}  \0\0\1 \color{black}.
      \end{array}
    \]
  \end{itemize}
  Since functions $c$, $k$ and $d$ generate $\functions(n,q)$ (see \cite{How95} or \cite{GM09}), the theorem follows.\qed
\end{proof}

\begin{theorem}  \label{theorem:complet_avec_initialisation}
  For any  $q \geq 2$ and $n \geq 3q$, there is $f \in \functions(n,q+1)$ which is $(n,q)$-universal by initialization.
\end{theorem}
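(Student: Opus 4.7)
The plan is to extend the construction of Theorem~\ref{theorem:complet_par_facteur} to the initialization-based setting. In that earlier proof, $f$ carried both a data component in $\(q\)^n$ and a control component; a small finite-state automaton on the control component could be driven to a canonical ``ready'' state by a fixed synchronizing word, and short words in the ready state then realized each generator of the three-element generating set $\{c,k,d\}$ of $\functions(n,q)$. Here the plan is to keep the same architecture, but with only one extra alphabet symbol and the additional constraint that initial configurations lie in $\(q\)^n$, so no control symbol is available a priori.

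Concretely, I would reserve a small (constant in $q$) block of coordinates as the \emph{control zone} $C$ in which the symbol $q$ encodes the state of a finite control automaton, and take the remaining coordinates as the \emph{data zone} $D$. The hypothesis $n\ge 3q$ is used in two ways: it ensures $|D|$ is large enough to realize the cyclic generator $c$ of Theorem~\ref{theorem:complet_par_facteur} on the data, and it provides the information-theoretic slack $(q+1)^n/q^n\ge(1+1/q)^{3q}$, which lower-bounds the number of distinct control states one can afford in addition to preserving the input $x$. I would then exhibit, independently of $h$, an initialization word $u\in\coords{n}^\ast$ such that $f^{(u)}$ both injectively encodes the original contents of $C$ into the extended-alphabet capacity of $D$ and installs the distinguished ``ready'' pattern in $C$ via the symbol $q$; the injectivity is essential, since the original $x|_C$ must be recoverable so that the subsequent computation can produce the correct $h(x)$.

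With the initialization in place, I would adapt the case-analysis of Theorem~\ref{theorem:complet_par_facteur}: given the ready pattern in $C$, exhibit short words $w_c,w_k,w_d$ that realize respectively the cyclic permutation $c$, the transposition $k$ and the assignment $d$ of the previous proof on the encoded representation of $x$, and that each return $C$ to the ready state. Concatenating these words according to any decomposition of $h$ in the generating set $\{c,k,d\}$ of $\functions(n,q)$ yields a word $v_h$ that computes $h$ on the data. A final, fixed termination word $z$ then decodes $D$ back to the pure $\(q\)^n$ alphabet and writes the required output value in every coordinate of $C$, so that the composed word $w=uv_hz$ satisfies $f^{(w)}(x)=h(x)$ for every $x\in\(q\)^n$.

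The main obstacle is the design of the initialization word $u$ and of the encoding it produces. The word $u$ must be data-independent yet set $C$ to a fixed control pattern while packing the full information of $x|_C$ into the small surplus of $\(q+1\)^{|D|}$ over $\(q\)^{|D|}$ states in the data zone; this is where the extra symbol $q$ and the inequality $n\ge 3q$ have to be used most delicately. Once such an encoding is available, the rest of the argument — the control-driven realization of $c,k,d$ and the final decoding — is a direct transcription of the synchronizing-word mechanism already employed in the proof of Theorem~\ref{theorem:complet_par_facteur}.
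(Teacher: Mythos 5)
Your outline identifies the right ingredients (a control automaton driving realizations of the three generators $c,k,d$ of $\functions(n,q)$, followed by a decoding step), but the step you explicitly leave open — the data-independent initialization word $u$ that ``injectively encodes the original contents of $C$ into the extended-alphabet capacity of $D$'' — is precisely where the argument lives or dies, and your plan does not fill it. The obstruction is not merely technical. Each step of $u$ is a single-coordinate update; to preserve $x_i$ for $i\in C$ before overwriting it with a control symbol, you must first deposit it somewhere in $D$. But a single coordinate $j\in D$ can hold only one symbol of $\(q+1\)$, so an update $y_j\gets g_j(x)$ that tries to remember both the old data $x_j$ and the incoming $x_i$ would need $g_j$ to be injective on $\(q\)^2$, which is impossible since $q^2>q+1$ for all $q\ge 2$. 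Your counting bound $(q+1)^n/q^n\ge(1+1/q)^{3q}$ shows only that the slack exists globally; it does not produce a word of local updates that realizes such an encoding, and the construction of one is the whole difficulty.

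The paper's proof dissolves this problem rather than solving it, by refusing to separate a control zone from a data zone. It partitions $\coords{n}$ into three blocks $\Phi^1,\Phi^2,\Phi^3$ of $q$ coordinates each plus a remainder $R$ (this is where $n\ge 3q$ comes from: three control bits times a block of size $q$ apiece). A control bit is read off a block by whether that block contains the extra symbol or not, and the key gadget is a pair of maps $\boxplus,\boxminus$ which toggle a block between $\(q\)^q$ and the set $Z^+$ of blocks carrying exactly one occurrence of the extra symbol, while preserving an interpretation map $\mu$: when the extra symbol sits in position $j$ of a block $z$, the value it overwrote is recoverable as $(j-1)-\sum_{i\ne j}z_i\bmod q$. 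Thus the control bit is carried \emph{within} the very $q$ coordinates whose data it would seem to displace, and no $C$-to-$D$ transfer, and in fact no initialization word at all, is needed: every $x\in\(q\)^n$ automatically has control state $\0\0\0$, which is the distinguished ready state of the six-state cycle in Figure~\ref{fg_initialised_complete}. Your plan could in principle be completed, but only by reinventing something essentially equivalent to the $\boxplus/\mu$ encoding; as written, it names the gap without closing it, and it misattributes to a synchronizing-word mechanism what the paper achieves by the stronger property that the ready state is already the default on $\(q\)^n$.
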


\section{Simulation by sets of networks}

So far we studied what a single function can simulate. We know shift our interest to semigroups generated by some sets of functions.

\subsection{Singular instructions}

An instruction is any $f^{(v)}$ for some $f \in \functions(n,q)$ and some $v \in [1,n]$. Burckel showed that any network is the composition of instructions: $\sequentialSemigroup{\functions(n,q)} = \semigroup{ \left\{  f^{(v)} : f \in \functions(n,q), v \in [1,n] \right\} } = \functions(n,q)$. As an immediate consequence, any permutation in $\Sym(n,q)$ is the composition of permutation instructions: $\Sym(n,q)$ is exactly $\semigroup{ \left\{ f^{(v)} \in \Sym(n,q) : f \in \functions(n,q), v \in [1,n] \right\} }$. We now determine what singular instructions generate: let
\[
	S(n,q) := \semigroup{ \left\{ f^{(v)} \in \Sing(n,q) : f \in \functions(n,q), v \in \coords{n} \right\} }.
\]

\begin{proposition} \label{prop:S(q,n)}
The semigroup $S(n,q)$ generated by singular instructions consists of all networks $f$ such that there exist $a,b \in \(q\)^n$ with $f(a) = f(b)$ and $\dH(a,b) = 1$.
\end{proposition}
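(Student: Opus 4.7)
The plan is to prove both inclusions. The easy inclusion $S(n,q) \subseteq \{f : \exists\, a, b\in \(q\)^n,\ f(a)=f(b),\ \dH(a,b)=1\}$ follows from the observation that any product $g = s_\ell \circ \cdots \circ s_1$ of singular instructions inherits the distance-one collision of its first factor: if $s_1 = f_1^{(v_1)}$ is singular, then by definition there exist $a, b \in \(q\)^n$ agreeing off coordinate $v_1$ (hence $\dH(a,b)=1$) with $s_1(a) = s_1(b)$, and applying $s_2, \ldots, s_\ell$ preserves this equality.

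For the reverse inclusion, I would start from $g \in \functions(n,q)$ with $g(a) = g(b)$ and $a, b$ differing exactly at coordinate $v$, and introduce the singular instruction $e$ at $v$ defined by $e(b) = a$ and $e(x) = x$ for $x \ne b$; this is an instruction at $v$ because $a, b$ agree off $v$, it is singular because $e(a) = e(b) = a$, and the hypothesis gives $g = g \circ e$. Applying Burckel's theorem to $g$ produces a decomposition $g = h_k \circ \cdots \circ h_1$ into (possibly bijective) instructions, so that $g = h_k \circ \cdots \circ h_1 \circ e$.

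The heart of the proof will be to replace each bijective $h_i$ by a singular instruction $h_i'$ at the same coordinate $w_i$ without altering the composition. Setting $E_0 = e(\(q\)^n) = \(q\)^n \setminus \{b\}$ and inductively $E_i = h_i(E_{i-1})$, each $E_{i-1}$ is a proper subset of $\(q\)^n$ since $|E_0| = q^n - 1$ and images can only shrink. Picking any $m \notin E_{i-1}$, I would redefine $h_i$ only at $m$, keeping its non-$w_i$ coordinates fixed to $m_j$ as required by an instruction, and changing $h_i(m)_{w_i}$ to any of the other $q - 1$ values in $\(q\)$; when $h_i$ was bijective the slice $x_{w_i} \mapsto h_i(x)_{w_i}$ with $x_{-w_i} = m_{-w_i}$ fixed is a permutation of $\(q\)$, so the new value necessarily collides with some $h_i(m')_{w_i}$ for $m'$ differing from $m$ only at $w_i$, rendering $h_i'$ a singular instruction. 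Since $h_i' = h_i$ on $E_{i-1}$, we obtain $h_i'(E_{i-1}) = E_i$ and therefore $g = h_k' \circ \cdots \circ h_1' \circ e$ as a product of singular instructions.

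The main obstacle I anticipate is the compatibility bookkeeping in this induction: at each step the configuration $m$ chosen for modification must lie outside the current image $E_{i-1}$, and the redefinition must simultaneously preserve the instruction structure (only coordinate $w_i$ touched) and genuinely break injectivity of $h_i$. The saving feature is that the hole $\{b\}$ carved by $e$ cannot be filled by later composition, so at every stage exactly the latitude needed to modify one slice of $h_i$ is available, which is precisely what converts Burckel's mixed factorization into one using only singular instructions.
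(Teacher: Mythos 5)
Your proof is correct and follows essentially the same route as the paper's: the easy inclusion is identical, and for the reverse direction both arguments invoke Burckel's decomposition, attach the initial singular assignment that carves a one-point hole, and then redefine each bijective instruction at the orphan of the preceding composition so that it becomes singular while leaving the overall composite unchanged. The paper packages this replacement step as a standalone claim (for any permutation instruction $p$ and any singular $g$ there is a singular instruction $s$ at the same coordinate with $s\circ g = p\circ g$), whereas you track the shrinking images $E_i$ explicitly, but the mechanism is the same.
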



Any network $f$ can be seen as a vertex colouring of the Hamming graph $H(n,q)$ ($x$ colored by $f(x)$). From the proposition above, networks in $S(n,q)$ correspond to improper colouring. Since the chromatic number of $H(n,q)$ is equal to $q$, we deduce that any network with rank at most $q-1$ can be generated by singular instructions. However, the network
	${f(x) = \left(x_1 + \ldots + x_n, 0, \ldots, 0 \right)}$
cannot be generated by singular instructions, since it generates a proper colouring of the Hamming graph.

A network $f$ is idempotent if $f^2 = f$. Idempotents are pivotal in the theory of semigroups, for they are the identity elements of the subgroups of a given semigroup. In particular it is interesting to know whether a semigroup $S$ is generated by its set of idempotents, because then any element $s \in S$ can be expressed as a product of consecutively distinct idempotents: $s = e_1 e_2 \dots e_k$.
We remark that if $f \in S(n,q)$ is idempotent and has rank $q^n - 1$, then it must be an assignment instruction.

\begin{theorem} \label{th:S(q,n)}
$S(n,q)$ is generated by assignment instructions for $q \ge 3$.
\end{theorem}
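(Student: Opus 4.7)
The inclusion $\semigroup{\text{assignment instructions}} \subseteq S(n,q)$ is immediate from the preceding remark: assignment instructions are idempotents of rank $q^n - 1$ lying in $S(n,q)$, and $S(n,q)$ is closed under composition. For the reverse inclusion, since $S(n,q)$ is by definition generated by singular instructions, it suffices to show that every singular instruction $g = f^{(v)}$ is a product of assignment instructions.

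The plan is to analyze $g$ line by line. For each $\ell \in \(q\)^{n-1}$, write $L_\ell = \{x : x_{-v} = \ell\}$ and let $\phi_\ell \colon \(q\) \to \(q\)$ be the induced action of $g$ on $L_\ell$. Since $g$ is singular, at least one $\phi_{\ell_0}$ is non-injective. On each line where $\phi_\ell$ is the identity or singular, Howie's classical theorem for $T_q$ decomposes $\phi_\ell$ as a product of idempotents of rank $q-1$, and each such idempotent lifts canonically to an assignment instruction acting non-trivially only on $L_\ell$. This handles the ``easy'' lines.

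The main obstacle is the bijective-line case: if $\phi_\ell$ is a non-trivial bijection, then no product of assignments confined to $L_\ell$ can realize it, since each assignment creates an irreversible merger on its line. The hypothesis $q \geq 3$ is used exactly here. I will decompose any such bijection into adjacent transpositions on $\(q\)$ and realize each transposition by a short block of assignment instructions that routes the two participants via a ``parking'' configuration (available because $q \geq 3$). The transient mergers introduced by parking will be absorbed into the collapse occurring on the singular line $\ell_0$, exploiting the interaction between the hypothesis $q \geq 3$ and the singularity of $g$.

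The technical crux is verifying that the composed product equals $g$ exactly: one must trace, for each configuration, its trajectory through the sequence of assignments and check consistency with $g(x)$, with particular care to not disturb fixed lines. Key to the verification is ordering the bijective-line blocks, the Howie-style decompositions on singular lines, and the cleanup of parked copies in a way that avoids conflict. For $q = 2$ no third value is available for parking, consistent with the hypothesis $q \geq 3$ in the statement.
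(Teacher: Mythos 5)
Your plan is the fifteen-puzzle strategy that the paper itself alludes to, and the scaffolding is right: line-by-line decomposition, Howie's idempotent theorem on singular lines, a third value as a parking spot available only for $q\ge 3$, and the observation that bijective lines must borrow a collision from the singular line $\ell_0$. But the ``technical crux'' you defer is the theorem, and two concrete pieces are missing. First, the parking configuration $z$ on a bijective line $L_\ell$ is not a free slot: evicting it creates a merger on $L_\ell$ itself, not on $\ell_0$. To realize a swap $(w\leftrightarrow x)$ you must physically transport a hole from $\ell_0$ to $z$ and back, via a path $a = z^0, z^1, \dots, z^n = z$ in the Hamming graph (one coordinate changed per step, chosen to avoid $w$ and $x$); the resulting assignment word realizes $(a\to b)\circ(w\leftrightarrow x)$, never a bare transposition.

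Second, each such block therefore leaves a residual $(a\to b)$ factor, and you must explain why the composed product still equals $g$. The paper does this by fixing $a,b$ on $\ell_0$ with $g(a)=g(b)$, writing $g = h\circ(a\to b)$ where $h$ fixes $a$, decomposing $h$ into commuting line-pieces, and distributing the idempotent $(a\to b)$ across those pieces (it commutes with every piece on a line $\ell\neq\ell_0$). Only then does each piece become a product of plain assignment instructions on $\ell_0$, or of $[a,b,w,x]$-blocks realized by the explicit hole-routing chain. Without this factor-and-distribute step, the ``ordering'' and ``cleanup of parked copies'' you flag have no closed algebraic form, so the verification you postpone is genuinely open rather than routine.
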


The previous result could be proved using the so-called fifteen-puzzle. In the original puzzle, an image is cut into a four-by-four grid of tiles; one of the tiles is removed, thus creating a hole; the remaining fifteen tiles are scrambled by sliding a tile into  the hole. The player is then given the scrambled image, and has to reconstruct it by repeatedly sliding a tile in the hole. 

Clearly, this game can be played on any simple graph $D$, where a hole is created at a vertex (say $h$), and one can ``slide'' one vertex into the hole, the hole thus moving to that vertex. 
If the hole goes back to its original place $h$, then we have created a permutation of $V(D) \setminus h$. The set of all possible permutations is closed under composition and hence it forms a group, called the \BF{puzzle group}\index{group!puzzle} $G(D,h)$.  Wilson \cite{Wil74} fully characterised that group for $2$-connected simple graphs; we give a simpler version of the theorem below. 

\begin{theorem}[Wilson's fifteen-puzzle theorem]
Let $D$ be a $2$-connected simple graph, then $G(D,h) \cong G(D,h')$ for all vertices $h, h' \in V(D)$. Moreover, if $D$ is the undirected cycle, then $G(D,h)$ is trivial. Otherwise, the following hold.
\begin{enumerate}
	\item If $D$ is not bipartite and has at least eight vertices, then $G(D,h) = \Sym(V(D) \setminus h)$.
	
	\item If $D$ is bipartite, then $G(D,h) = \Alt(V(D) \setminus h)$.
\end{enumerate}
\end{theorem}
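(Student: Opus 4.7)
My plan is to follow Wilson's original structural argument, which splits the proof into three parts: the observation that the puzzle group is trivial on cycles; the conjugacy $G(D,h) \cong G(D,h')$; and the identification of the group as either $\Alt(V(D)\setminus h)$ or $\Sym(V(D)\setminus h)$ via explicit generators. First, the cycle case is immediate: the hole can only shuttle around the cycle, and bringing it back to $h$ forces every other vertex back to its original position, so $G(D,h)$ is trivial. For the conjugacy statement, 2-connectedness gives two internally disjoint paths between $h$ and $h'$; sliding the hole from $h$ to $h'$ along one path and back along the other yields a bijection of the remaining vertices under which the two puzzle groups are conjugate.

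Next, I would establish the parity constraint. Every slide exchanges the hole with one neighbor, i.e.\ acts as a transposition on $V(D)$. If $D$ is bipartite, every closed walk of the hole has even length, so every element of $G(D,h)$ restricts to an even permutation of $V(D)\setminus h$, giving the inclusion $G(D,h) \subseteq \Alt(V(D)\setminus h)$. The core technical step is then to show that $G(D,h)$ contains enough 3-cycles to generate the alternating group. Since $D$ is 2-connected and not a cycle, it contains a theta subgraph: two vertices joined by three internally disjoint paths. Placing the hole on this subgraph and sliding it around different pairs of paths produces cyclic permutations of their interior vertices; by composing two such cyclic motions and carefully cancelling their common part, one extracts a 3-cycle on three specific vertices. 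Transitivity of the puzzle group (itself guaranteed by connectedness and the ability to slide the hole anywhere) then lets one conjugate this 3-cycle to any triple of the appropriate shape, and these 3-cycles generate $\Alt(V(D)\setminus h)$.

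Finally, for the non-bipartite case with $|V(D)| \ge 8$, I would produce a single odd permutation. A non-bipartite graph contains an odd cycle of length $\ell$; after bringing the hole onto this cycle, one tour around it produces an $(\ell-1)$-cycle on its non-hole vertices, which is odd since $\ell$ is odd. Combined with $\Alt(V(D)\setminus h)$, this yields the full $\Sym(V(D)\setminus h)$. The main obstacle lies in the explicit 3-cycle construction: one must track carefully how vertices shift along each branch of the theta subgraph to confirm that the final composition is genuinely a 3-cycle rather than a longer permutation, and one must rule out degenerate placements of the theta subgraph where the branches are too short to accommodate the slides. The hypothesis $|V(D)| \ge 8$ is precisely what excludes the famous exceptional graph on seven vertices, for which the theta argument fails and the puzzle group is strictly smaller than the alternating group.
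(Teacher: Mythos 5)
The paper does not prove Wilson's theorem at all: it is quoted as a known result, with a citation to Wilson's 1974 paper \cite{Wil74}, and the text explicitly says ``we give a simpler version of the theorem below.'' So there is no proof in the paper to compare against; the theorem is used as a black box to motivate (but not even to prove) Theorem~\ref{th:S(q,n)}. Your proposal is therefore judged on its own terms.

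As an outline of Wilson's original argument your sketch hits the right landmarks (cycle case, parity obstruction from bipartiteness, 3-cycles via a theta subgraph, an odd permutation from an odd cycle, the seven-vertex exception). But two points are off. First, the conjugacy $G(D,h)\cong G(D,h')$ does not come from two internally disjoint paths, and ``sliding the hole from $h$ to $h'$ along one path and back along the other'' produces an \emph{element} of $G(D,h)$, not a conjugating bijection between the two groups. The correct and simpler argument is: pick any walk $w$ from $h$ to $h'$, let $\sigma$ be the permutation of $V(D)$ induced by sliding the hole along $w$; then $G(D,h')=\sigma\,G(D,h)\,\sigma^{-1}$. Only connectedness is needed here; $2$-connectedness is needed elsewhere (to guarantee a theta subgraph and to make the group transitive on positions without stranding the hole). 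Second, it is slightly misleading to say ``the theta argument fails'' for the exceptional seven-vertex graph: the exceptional graph $\theta_0$ is itself a theta graph; the point is that for those specific path lengths the permutations generated close up into a proper subgroup (of order 120) of $\Sym(6)$, so the generic extraction of a 3-cycle cannot go through. The hypothesis $|V(D)|\ge 8$ in the paper's statement is a convenient way to exclude this single exception without describing it, but your explanation of \emph{why} the bound is there should be more careful. Finally, the genuinely hard part of Wilson's proof — tracking the hole's trips around the two cycles of the theta subgraph and verifying that the commutator really is a 3-cycle, for all admissible branch lengths — is precisely where your sketch says ``by composing two such cyclic motions and carefully cancelling their common part''; that is the whole theorem, and a complete proof would need to carry it out in detail.
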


Using assignment instructions $(a \to b)$ to simulate a network $f$ of rank $q^n - 1$ can be viewed as playing the fifteen-puzzle on the Hamming graph $H(n,q)$: the first $(a^1 \to b^1)$ places a hole in vertex $a^1$ and any subsequent $(a^k \to b^k)$ slides the vertex $a^k$ into the hole $b^k$ (and the hole moves to $a^k$ instead). Since $H(n,q)$ is not bipartite for $q \ge 3$ (and it has at least nine vertices for $n \ge 2$), we can apply Wilson's theorem and, after a bit more work, prove Theorem \ref{th:S(q,n)} that way. However, the hypercube $H(n,2)$ is bipartite, then the puzzle group is only the alternating group. Thus, $S(n,2)$ is not generated by assignment instructions, and in particular $f = (010\cdots0 \leftrightarrow 110\cdots0) \circ (000\cdots0 \to 100\cdots0)$ cannot be generated by assignment instructions.

\subsection{Simulation by graphs}

The \textbf{interaction graph} of $f \in \functions(n,q)$ is the (directed graph) which has vertex set $V = \coords{n}$ and has an arc from $u$ to $v$ if and only if $f_v$ depends essentially on $u$, i.e. there exists $a,b \in \(q\)^n$ such that $a_{V\setminus u} = b_{V\setminus u}$ and $f_v(a) \ne f_v(b)$. For any graph $D$ with $n$ nodes, we denote the set of networks in $\functions(n,q)$ whose interaction graph is a subgraph of $D$ as $\functions(D,q)$.

A graph is reflexive if for any vertex $v$, $(v,v)$ is an arc in $D$. Note that for any reflexive graph $D$ it holds
	${\sequentialSemigroup{ \functions(D, q) } \subseteq \asynchronousSemigroup{ \functions(D, q) } = \parallelSemigroup{ \functions(D, q) }.}$
The first inclusion is trivial; the equality follows from the fact that for any $f \in \functions(D,q)$ and any $S \subseteq [1,n]$, $f^{(S)}$ belongs to $\functions(D,q)$ as well. 
Moreover, it is clear that if $\sequentialSemigroup{ \functions(H, q) } = \functions(n,q)$, then $H$ is reflexive (otherwise, $\sequentialSemigroup{\functions(H,q)}$ would not contain any permutation). The reflexive graphs which can simulate the whole of $\functions(n,q)$ synchronously were classified by Tchuente in \cite{Tch86}. In fact, the same graphs can simulate the whole of $\functions(n,q)$ asynchronously or sequentially.

\begin{theorem} \label{th:graph_universal}
Let $D$ be a reflexive graph on $n$ vertices. Then the following are equivalent.
\begin{enumerate}
	\item \label{it:sequential_universal_graph} $\sequentialSemigroup{ \functions(D, q) } = \functions(n, q)$.
	
	\item \label{it:asynchronous_universal_graph} $\asynchronousSemigroup{ \functions( D, q) } = \functions(n, q)$.
	
	\item \label{it:parallel_universal_graph} $\parallelSemigroup{ \functions(D, q) } = \functions(n, q)$.
	
	\item \label{it:universal_graph} $D$ is strongly connected and it has a vertex of in-degree $n$.
\end{enumerate}
\end{theorem}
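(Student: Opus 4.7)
The plan is to establish $(4) \Rightarrow (1)$, since the remaining directions are already in hand: $(1) \Rightarrow (2)$ is the trivial inclusion, $(2) \Leftrightarrow (3)$ follows from the observation made just before the theorem, and $(3) \Leftrightarrow (4)$ is Tchuente's theorem~\cite{Tch86}. So we assume that $D$ is reflexive, strongly connected, and has a vertex $v^*$ of in-degree $n$, and we aim for $\functions(n,q) \subseteq \sequentialSemigroup{\functions(D,q)}$; the reverse inclusion is immediate from $\sequentialSemigroup{\functions(D,q)} \subseteq \parallelSemigroup{\functions(D,q)} = \functions(n,q)$.

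The first step is to invoke Burckel's theorem~\cite{Bur96}, which writes every element of $\functions(n,q)$ as a composition of instructions $g^{(v)}$; hence it suffices to prove that every such instruction lies in $\sequentialSemigroup{\functions(D,q)}$. When $v = v^*$ this is immediate: the network $\tilde g$ defined by $\tilde g_{v^*} = g_{v^*}$ and $\tilde g_w(x) = x_w$ for $w \ne v^*$ has its interaction graph contained in $D$ (the full in-degree at $v^*$ handles $\tilde g_{v^*}$, reflexivity handles the other coordinates), and $\tilde g^{(v^*)} = g^{(v^*)}$. For $u \ne v^*$, let $\tau_u$ denote the involution of $\(q\)^n$ that swaps the values at coordinates $u$ and $v^*$ and fixes all others. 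A short computation yields the conjugation identity $g^{(u)} = \tau_u \circ \tilde h^{(v^*)} \circ \tau_u$, where $\tilde h_{v^*}(y) := g_u(\tau_u(y))$. So everything reduces to a key lemma: $\tau_u \in \sequentialSemigroup{\functions(D,q)}$ for every $u$.

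The lemma is proved by induction on the length $k$ of a shortest directed path from $v^*$ to $u$ in $D$ (well-defined by strong connectivity). The base case $k = 0$ is trivial. For $k \ge 1$, pick such a path and let $p$ be its penultimate vertex; then $(p,u) \in D$ and the distance from $v^*$ to $p$ is $k - 1$, so $\tau_p \in \sequentialSemigroup{\functions(D,q)}$ by induction. Identifying $\(q\)$ with $\Z/q\Z$, consider the instruction $\sigma^{(u)}$ with $\sigma_u(x) = x_u + x_p$, which belongs to $\functions(D,q)$ thanks to the arc $(p,u)$ and the self-loop at $u$. A direct calculation shows that $\tau_p \circ \sigma^{(u)} \circ \tau_p$ acts as a \emph{virtual shear}: it fixes every coordinate except $u$, which receives the value $x_u + x_{v^*}$, even though $(v^*,u)$ may not be in $D$. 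Composed with the direct instructions $x_{v^*} \leftarrow x_{v^*} - x_u$ and $x_{v^*} \leftarrow -x_{v^*}$ (both legitimate since $v^*$ has in-degree $n$ and $D$ is reflexive), the classical four-step affine swap $(a,b) \mapsto (a, a+b) \mapsto (-b, a+b) \mapsto (-b, a) \mapsto (b, a)$ over $\Z/q\Z$ then produces exactly $\tau_u$, closing the induction.

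The hard part will be precisely this inductive step. Since $(v^*,u)$ may be absent from $D$, no single update of a network in $\functions(D,q)$ can write $x_{v^*}$ directly into coordinate $u$; moreover, a naive swap-along-the-path strategy fails because the reverse arcs $(p_{i+1},p_i)$ need not belong to $D$ either. The conjugation by $\tau_p$ circumvents this by using the inductively constructed swap as a handle to reach the value $x_{v^*}$ through the existing arc $(p,u)$. This is exactly where both hypotheses of~$(4)$ are used jointly: strong connectivity supplies the shortest path $v^* \to \cdots \to p \to u$ that drives the induction, while the full in-degree at $v^*$ legitimises every update of the affine swap that writes into $v^*$. Once $\tau_u$ is in place for all $u$, combining the conjugation identity with Burckel's theorem yields the desired inclusion $\functions(n,q) \subseteq \sequentialSemigroup{\functions(D,q)}$.
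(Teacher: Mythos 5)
Your argument is correct, and it follows the same high-level strategy as the paper for $(4)\Rightarrow(1)$ — reduce, via Burckel, to instructions; handle instructions updating the full-in-degree hub $v^*$ directly; and conjugate by variable-transpositions $\tau_u$ to reach all other coordinates. The genuine divergence is in how you realise the transpositions $\tau_u$ inside $\sequentialSemigroup{\functions(D,q)}$. The paper isolates this as a standalone Lemma (Lemma~\ref{lem:strong_all_transpositions}): for \emph{any} reflexive strong $D$, \emph{all} permutations of variables are sequentially realisable, proved by an explicit chain of additive updates around a directed cycle through each arc $(u,v)$, followed by a spanning-tree argument to compose arc-transpositions into arbitrary ones. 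Your route proves less — only the specific swaps $\tau_u$ with the hub — but does so by an induction on distance from $v^*$, conjugating the shear $x_u\leftarrow x_u+x_p$ by $\tau_p$ to manufacture the ``virtual'' shear $x_u\leftarrow x_u+x_{v^*}$, and then running the four-step affine swap with the two legal direct updates at $v^*$. The trade-off is that your construction uses the in-degree-$n$ hypothesis even in the transposition step, so it cannot serve double duty as an independent characterisation of strong graphs the way the paper's Lemma does; on the other hand your construction is more modular and avoids the spanning-tree assembly. A second, smaller difference: for the direction $(2)/(3)\Rightarrow(4)$ you cite Tchuente's theorem directly, whereas the paper re-derives it self-containedly (strongness via Lemma~\ref{lem:strong_all_transpositions}, and the in-degree-$n$ condition via a counting argument that $\sum_x f(x)\equiv 0 \pmod q$ coordinatewise whenever some in-degree falls short). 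Citing Tchuente is perfectly legitimate, but it makes the proof less self-contained than the one in the paper.
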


A permutation of variables is any network $f := \bar{\phi}$ defined by $f_i(x) = x_{\phi(i)}$ for some $\phi \in \Sym([1,n])$. We first show that we can permute variables freely if the graph is strongly connected (and is reflexive for the sequential case).

\begin{lemma} \label{lem:strong_all_transpositions}
The following are equivalent for a reflexive graph $D$.
\begin{enumerate}
	\item \label{it:sequential_all_transpositions} $\sequentialSemigroup{ \functions( D, q ) }$ contains all permutations of variables of $\functions(n,q)$.
	
	\item \label{it:asynchronous_all_transpositions} $\asynchronousSemigroup{ \functions( D, q ) }$ contains all permutations of variables of $\functions(n,q)$.
	
	\item \label{it:strong_all_transpositions} $D$ is strong.
\end{enumerate}
\end{lemma}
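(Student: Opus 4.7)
The implication (\ref{it:sequential_all_transpositions}) $\Rightarrow$ (\ref{it:asynchronous_all_transpositions}) is immediate since every sequential update coincides with the asynchronous update for $T = \{v\}$.

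For (\ref{it:asynchronous_all_transpositions}) $\Rightarrow$ (\ref{it:strong_all_transpositions}), my plan is a contrapositive argument: if $D$ is not strong, there exist $u,v \in \coords{n}$ such that $u$ does not reach $v$ via a directed path in $D$. Thanks to reflexivity, every single asynchronous update $f^{(T)}$ of some $f \in \functions(D,q)$ has an interaction graph inside $D$ (the loops account for coordinates left untouched). A short induction on the length of a composition then shows that every element of $\asynchronousSemigroup{\functions(D,q)}$ has its interaction graph inside the transitive closure $D^\ast$. In particular, no such element can realize the permutation of variables $\bar\phi$ for $\phi = (u\;v)$, since $\bar\phi_v(x) = x_u$ would require the missing arc $(u,v)$ in $D^\ast$.

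The main direction is (\ref{it:strong_all_transpositions}) $\Rightarrow$ (\ref{it:sequential_all_transpositions}). My plan is to endow the alphabet $\(q\)$ with the ring structure of $\Z/q\Z$ and work exclusively with linear sequential updates. For every arc $(u,v) \in D$ and every $\lambda \in \Z/q\Z$, the update sending $x_v$ to $x_v + \lambda x_u$ (with all other coordinates unchanged) lies in $\sequentialSemigroup{\functions(D,q)}$ and realizes the elementary transvection $E_{vu}(\lambda) := I + \lambda e_v e_u^T$. The main obstacle will be to realize $E_{vu}(\lambda)$ when $(u,v)$ is not an arc of $D$; for this I would use strong connectivity to pick a shortest directed path $u = p_0 \to \cdots \to p_k = v$ and induct on $k$. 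Setting $w = p_{k-1}$ (which is distinct from $v$ by shortest-path minimality), the induction hypothesis supplies sequential realizations of $E_{wu}(\pm\lambda)$ using only sub-path coordinates, so that $v$'s value remains untouched during those segments. A direct computation verifies the commutator-style identity
\[
  E_{vu}(\lambda) \;=\; E_{vw}(-1) \cdot E_{wu}(-\lambda) \cdot E_{vw}(1) \cdot E_{wu}(\lambda),
\]
and each factor is sequentially realizable: the outer $E_{vw}(\pm 1)$ by a single update on the arc $(w,v) \in D$, and the inner $E_{wu}(\pm\lambda)$ by induction.

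To conclude, I would invoke the classical fact that the family $\{E_{ij}(\lambda) : i \ne j,\ \lambda \in \Z/q\Z\}$ generates $\SL_n(\Z/q\Z)$, which already contains every even permutation matrix. For odd permutations, the self-update $x_v \leftarrow -x_v$ belongs to $\functions(D,q)$ (it uses only the loop at $v$) and realizes a diagonal matrix of determinant $-1$; multiplying by it switches parity and so produces every odd permutation matrix as well (the case $q = 2$ is automatic since every permutation matrix has determinant $1$ over $\Z/2\Z$). Since the permutation of variables $\bar\phi$ is exactly the action of the permutation matrix $P_\phi$, this gives $\bar\phi \in \sequentialSemigroup{\functions(D,q)}$ for every $\phi \in \Sym(\coords{n})$, as required.
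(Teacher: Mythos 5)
Your proof is correct, but it takes a genuinely different route from the paper's for the main implication \ref{it:strong_all_transpositions} $\Rightarrow$ \ref{it:sequential_all_transpositions}. The paper constructs the variable transposition $\overline{(u \leftrightarrow v)}$ directly and explicitly for every arc $(u,v)\in E(D)$: it fixes a return path $v,w_1,\dots,w_l,u$ in $D$ and writes a concrete sequence of additive updates (a generalized xor-swap that pushes the sum $x_u+x_v$ around the cycle and back), then uses a spanning-tree argument to get all permutations of variables from arc-transpositions. You instead prove by induction on shortest-path length that \emph{every} elementary transvection $E_{vu}(\lambda)$ over $\Z/q\Z$ lies in $\sequentialSemigroup{\functions(D,q)}$ — the base case uses the arc $(u,v)$ directly and the induction step uses the commutator identity $[E_{vw}(-1),E_{wu}(-\lambda)]=E_{vu}(\lambda)$, which is sound since on a shortest path $w=p_{k-1}$ is distinct from both $u$ and $v$. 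You then invoke the fact that elementary transvections generate $\SL_n(\Z/q\Z)$, and handle odd permutations with the determinant-$(-1)$ diagonal update $x_v\gets -x_v$ (a valid sequential update thanks to the loop at $v$). This yields strictly more than the lemma asks — you get the whole coset $\{A\in\GL_n(\Z/q\Z):\det A=\pm1\}$ rather than just permutation matrices — but it leans on a nontrivial algebraic fact (that $E_n(\Z/q\Z)=\SL_n(\Z/q\Z)$ for general, not necessarily prime, $q$; this follows e.g.\ from surjectivity of $\SL_n(\Z)\to\SL_n(\Z/q\Z)$, or from the fact that $\Z/q\Z$ is semilocal). The paper's argument is more elementary and self-contained, requiring nothing beyond verifying a fixed short update sequence. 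Your (2) $\Rightarrow$ (3) via the transitive-closure bound on interaction graphs of composites is the same idea as the paper's, just spelled out slightly more.

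One presentational note: your remark that the inner $E_{wu}(\pm\lambda)$ leave $v$ untouched is unnecessary for correctness — any sequential realization of the map $E_{wu}(\pm\lambda)$ composes correctly with the outer updates, since composition of the realizing words realizes the matrix product — but it is harmless and perhaps clarifying.
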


\begin{proof}[Proof of Theorem \ref{th:graph_universal}]
Clearly, \ref{it:sequential_universal_graph} implies \ref{it:asynchronous_universal_graph}, which in turn is equivalent to \ref{it:parallel_universal_graph}. We prove \ref{it:asynchronous_universal_graph} implies \ref{it:universal_graph}. Let $D$ such that $\asynchronousSemigroup{ \functions(D, q) } = \functions(n, q)$. By Lemma \ref{lem:strong_all_transpositions}, $D$ is strong. We now prove that $D$ has a vertex of in-degree $n$. Otherwise, let $f \in \functions(D,q)$ of rank $q^n - 1$. Let $a \in \orphans(f)$ and $b$ with $|f^{-1}(b)| = 2$ (and hence $|f^{-1}(x)| = 1$ for any other $x$). We then have
\[
	\sum_{x \in \(q\)^n} f(x) \bmod q^n= b - a \ne 0.
\]
On the other hand, it is easily seen that for any $y \in \(q\)$, $|f_v^{-1}(y)|$ is a multiple of $q^{n - d_v}$ where $d_v$ is the in-degree of $v$ in $D$, hence
\[
	\sum_{x \in \(q\)^n} f_v(x) \bmod q = \sum_{y \in \(q\)} |f_v^{-1}(y)| y \bmod q = 0.
\]
Doing this componentwise for all $v$, we obtain $\sum_{x \in \(q\)^n} f(x) = 0$,
which is the desired contradiction.

We prove \ref{it:universal_graph} implies \ref{it:sequential_universal_graph}. We only need to show that all instructions in $\functions(n,q)$ belong to $\sequentialSemigroup{ \functions(D, q) }$. Let $u$ be a vertex of in-degree $n$, then we already have any instruction updating $u$. Let $v$ be another vertex, and $g$ be an instruction updating $v$,
then $g = \overline{(u \leftrightarrow v)} \circ h \circ \overline{(u \leftrightarrow v)}$, where $h$ is the instruction updating $u$ such that $h_u = g_v \circ \overline{(u \leftrightarrow v)}$.
Then $\overline{(u \leftrightarrow v)} \in \sequentialSemigroup{ \functions(D, q) }$ according to Lemma \ref{lem:strong_all_transpositions}. Thus, any instruction can be generated.\qed
\end{proof}

\section{Future work}

The contrast between the complete sequential simulator for $\Sym(n,q)$ and the existence of non-bijective functions that are not sequentially simulatable in the non-Boolean case is striking. We would like first to settle the Boolean case: we conjecture that all functions of $\functions(n,2)$ are sequentially simulatable for large enough $n$. For ${q\geq 3}$, in order to better understand the set of sequentially simulatable networks, one could for instance analyze how much synchronism is required to simulate them (how large are the sets $V$ in the asynchronous updates $f^{(V)}$ used to simulate them). In particular, one may ask whether, for all $n$, there exists some network with $n$ entities that require a synchronous update $f^{([1,n])}$ in order to be simulated asynchronously. Besides, the networks considered in sections~\ref{sec:seq}, \ref{sec:async} and \ref{sec:largeralphabet} have an unconstrained interaction graph. The situation could be very different when restricting all networks to particular a family of interaction graphs (bounded degree, bounded tree-width, etc).  
Finally, still concerning interaction graphs, the characterization of Theorem~\ref{th:graph_universal} is about reflexive graphs. We would like to extend it to any graph (not necessarily reflexive).

\bibliographystyle{splncs04}
\bibliography{FDS}

\begin{thebibliography}{10}
\providecommand{\url}[1]{\texttt{#1}}
\providecommand{\urlprefix}{URL }
\providecommand{\doi}[1]{https://doi.org/#1}

\bibitem{these_florian}
Bridoux, F.: {Intrinsic simulations and complexities in automata networks}.
  Theses, {Aix-Marseile Universit{\'e}} (Jul 2019),
  \url{https://hal.archives-ouvertes.fr/tel-02436228}

\bibitem{BCG19}
Bridoux, F., Castillo-Ramirez, A., Gadouleau, M.: Complete simulation of
  automata networks. Journal of Computer and System Sciences  \textbf{109},
  1--21 (May 2020)

\bibitem{DBLP:journals/corr/abs-2001-09198}
Bridoux, F., Gadouleau, M., Theyssier, G.: Simulation of automata networks.
  CoRR  \textbf{abs/2001.09198} (2020), \url{https://arxiv.org/abs/2001.09198}

\bibitem{Bur96}
Burckel, S.: Closed iterative calculus. Theoretical Computer Science
  \textbf{158},  371--378 (May 1996)

\bibitem{BGT09}
Burckel, S., Gioan, E., Thom\'e, E.: Mapping computation with no memory. In:
  Proc. International Conference on Unconventional Computation. pp. 85--97.
  Ponta Delgada, Portugal (September 2009)

\bibitem{BGT14}
Burckel, S., Gioan, E., Thom\'e, E.: Computation with no memory, and
  rearrangeable multicast networks. Discrete Mathematics and Theoretical
  Computer Science  \textbf{16},  121--142 (2014)

\bibitem{CFG14}
Cameron, P.J., Fairbairn, B., Gadouleau, M.: Computing in permutation groups
  without memory. Chicago Journal of Theoretical Computer Science
  \textbf{2014}(07),  1--20 (2014)

\bibitem{Gad18b}
Gadouleau, M.: On the influence of the interaction graph on a finite dynamical
  system. Natural Computing  (2019). \doi{10.1007/s11047-019-09732-y}

\bibitem{GR16}
Gadouleau, M., Richard, A.: Simple dynamics on graphs. Theoretical Computer
  Science  \textbf{628},  62--77 (2016)

\bibitem{GR15}
Gadouleau, M., Riis, S.: Memoryless computation: new results, constructions,
  and extensions. Theoretical Computer Science  \textbf{562},  129--145
  (January 2015)

\bibitem{GM09}
Ganyushkin, O., Mazorchuk, V.: Classical Finite Transformation Semigroups: An
  Introduction, Algebra and Applications, vol.~9. Springer-Verlag, London
  (2009)

\bibitem{GN12}
Goles, E., Noual, M.: Disjunctive networks and update schedules. Advances in
  Applied Mathematics  \textbf{48}(5),  646--662 (2012)

\bibitem{GOLES2016118}
Goles, E., Montealegre, P., Salo, V., T{\"o}rm{\"a}, I.: Pspace-completeness of
  majority automata networks. Theoretical Computer Science  \textbf{609},  118
  -- 128 (2016). \doi{10.1016/j.tcs.2015.09.014}

\bibitem{How95}
Howie, J.M.: Fundamentals of Semigroup Theory. Oxford Science Publications
  (1995)

\bibitem{NS17}
Noual, M., Sen{\'e}, S.: Synchronism versus asynchronism in monotonic boolean
  automata networks. Natural Computing  (Jan 2017).
  \doi{10.1007/s11047-016-9608-8}

\bibitem{Tch86}
Tchuente, M.: Computation on binary tree-networks. Discrete Applied Mathematics
   \textbf{14},  295--310 (1986)

\bibitem{Wil74}
Wilson, R.M.: Graph puzzles, homotopy, and the alternating group. Journal of
  combinatorial theory B  \textbf{16},  86--96 (1974)

\bibitem{WR79i}
Wu, A., Rosenfeld, A.: Cellular graph automata. {I}. basic concepts, graph
  property measurement, closure properties. Information and Control
  \textbf{42}(3),  305 -- 329 (1979)

\bibitem{WR79ii}
Wu, A., Rosenfeld, A.: Cellular graph automata. {II}. graph and subgraph
  isomorphism, graph structure recognition. Information and Control
  \textbf{42},  330--353 (09 1979). \doi{10.1016/S0019-9958(79)90296-1}

\end{thebibliography}

\newpage
\appendix

\section{Proof of Theorem~\ref{thm:nonboolean}}

\begin{proof}
  Let $a,b,c,d$ such that $a_{[1,2]} = (0,0),\  b_{[1,2]} = (0,1),\ c_{[1,2]} = (1,0),\ d_{[1,2]} = (1,2)$ and $ a_{[3,n]} = b_{[3,n]} = c_{[3,n]} = d_{[3,n]} = (0)^{n-2}$.
  Let $h \in \functions(n,q)$ such that $h(b) = a,\ h(c) = b,\  h(d) = c$ and for all $x \in \(q\)^n\ \setminus\ \{b,c,d\}$, $h(x) = x$.
  So $d$ is the only orphan configuration and $h$ is of rank $q^n-1$: $\orphans(h)=\{d\}$.
  
  For the sake of contradiction, let us suppose that there exists $f \in \functions(n,q)$ and $w = w_1\cdots w_p \in \coords{n}^*$ such that $f^{(w)} = h$.
  \begin{myclaim}\label{claim:twoupdated}
    Coordinate $2$ is updated at least once in $w$ and $f^{(2)}$ is of rank ${q^n-1}$. Moreover, denoting by $k$ the minimum step in $w$ such that $w_k = 2$, then $f^{(w_j)}$ is bijective for any $j<k$. In particular $f^{(w_1, \dots, w_{k-1})}$ is bijective.
  \end{myclaim}
  \begin{proof}
    Since $h$ does not act like the identity on coordinate $2$ then coordinate $2$ must be updated at least once. Then the rank of $h$ is not larger than the rank of $f^{(2)}$. So the rank of $f^{(2)}$ is at least $q^n-1$. Consider now the smallest $k$ such that ${f^{(w_1\cdots w_k)}(a)=f^{(w_1\cdots w_k)}(b)}$ (it must exist since $h(a)=h(b)$). First, ${f^{(w_1\cdots w_{k-1})}(a)\neq f^{(w_1\cdots w_{k-1})}(b)}$ (it would contradict minimality of $k$) and ${f^{(w_1\cdots w_{k-1})}(\alpha)\neq f^{(w_1\cdots w_{k-1})}(\beta)}$ for any other pair $\alpha\neq\beta$ because otherwise it would create a second pair of configurations with the same image under $h$. Therefore $f^{(w_1\cdots w_{k-1})}$ is bijective and necessarily ${f^{(w_k)}}$ is not bijective. Moreover coordinate $w_k$ does not appear in ${w_1\cdots w_{k-1}}$ (it would contradict bijectivity of $f^{(w_1\cdots w_{k-1})}$) so ${f^{(w_1\cdots w_{k-1})}(a)_{w_k}=a_{w_k}}$ and ${f^{(w_1\cdots w_{k-1})}(b)_{w_k}=b_{w_k}}$. Since $a$ and $b$ differ only on coordinate $2$ we must have $w_k=2$ to ensure ${f^{(w_1\cdots w_k)}(a)=f^{(w_1\cdots w_k)}(b)}$ while ${f^{(w_1\cdots w_{k-1})}(a)\neq f^{(w_1\cdots w_{k-1})}(b)}$. The claim follows.\qed
  \end{proof}
  \begin{myclaim}\label{claim:uniqueorphan}
    For any ${\ell\geq k}$ there exists $y' \in \(q\)^n$ such that $\orphans(f^{(w_1\cdots w_\ell)}) = \{ y' \}$ and $y'_2 = d_2 = 2$.
  \end{myclaim}
  \begin{proof}
    From Claim~\ref{claim:twoupdated} we know that ${f^{(2)}}$ is of rank $q^n-1$, and in fact any ${f^{(w_1\cdots w_\ell)}}$ is of this rank. Let us denote by $\omega(\ell)$ the unique element of ${\orphans(f^{(w_1\cdots w_\ell)})}$ for each $\ell\geq k$. We will show by induction on $\ell$ that ${\omega(\ell)_2=\omega(k)_2}$. This allows to conclude the Claim since ${\orphans(h)=\orphans(f^{(w)})=\{d\}}$. The initialization of the induction is trivial. For the induction step, suppose it holds ${\omega(\ell)_2=\omega(k)_2}$ for some $|w|>\ell\geq k$. First, if $w_{\ell+1}=2$ then ${\orphans(f^{(w_1\cdots w_{\ell+1})}) = \orphans(f^{(2)})}$ because ${f^{(w_1\cdots w_{\ell+1})}=f^{(2)}\circ f^{(w_1\cdots w_\ell)}}$ and is of same rank as ${f^{(2)}}$. In this case we deduce immediately ${\omega(\ell+1)_2=\omega(k)_2}$. Suppose now that $j=w_{\ell+1}\neq 2$. Let $E := \{ e \in\(q\)^n : \omega(\ell)_{\coords{n} \setminus \{j\} } = e_{\coords{n} \setminus \{j\} }  \}$. Note that ${f^{(j)}(x)\in E}$ implies $x\in E$. There are two cases:
    
    \begin{enumerate}
    \item if $f^j$ restricted to the set
      $E$ is not a bijection, there exists
      $e \in E$ such that for any $b \in E, f^{(j)}(b) \neq e$. Then
      we have $e \in \orphans(f^{(j)})$. 
    \item if $f^{(j)}$
      restricted to the set $E$ is a bijection, then for all
      $x \in E$ there is a unique $y \in E$ such that
      $f^{(j)}(y) = x$. Let us take $e := f^{(j)}(\omega(\ell))$ which is in $E$ by definition. Then since
      $\omega(\ell) \in \orphans( f^{(w_1, \dots, w_{k+\ell})} )$, we have $e \in\orphans(f^{(w_1, \dots, w_{k+\ell+1})})$. 
    \end{enumerate}
    In both cases we have,
    $e \in \orphans(f^{(w_1, \dots, w_{k+\ell+1})})\cap E$.
    Thus, $\omega(\ell+1) = e \in E$ and as a result
    $\omega(\ell+1)_2 = e_2 = \omega(\ell)_2 $ which concludes the induction.\qed
  \end{proof}
  Let $y'$ be the only orphan of $f^{(w_1 \cdots w_k)}$, $z' := f^{(2)}(y')$ and let $z,y \in \(q\)^n$ be such that $f^{(w_1\cdots w_{k-1})}(z) = z'$ and $f^{(w_1\cdots w_{k-1})}(y) = y'$.
  Now, there are two cases:
  
  \begin{enumerate}
  \item Suppose there is $x' \in \(q\)^n \setminus \{ y'\}$, such that
    $f^{(2)}(x') = z'$ and let $x$ such that
    $f^{(w_1, \dots, w_{k-1})}(x) = x'$. As a result,
    ${f^{(w_1, \dots, w_k)}(y) = f^{(2)}(y') = z' = f^{(2)}(x') =
      f^{(w_1, \dots, w_k)}(x)}$. Thus,
    $h(y) = f^{(w)}(y) = f^{(w)}(x) = h(x)$. We then have $y \in \{a,b\}$, which is a contradiction since $y_2 = y'_2 = 2 \notin \{a_2, b_2\}$.
    
  \item 	Suppose $\bigl(f^{(2)}\bigr)^{-1}(z')=\{y'\}$.
    Let us first show that $2$ is updated only once in $w$.
    We actually show that if it is updated at least $2$ times then $h$ is of rank at most $q^n-2$. 
    So let $k'>k$ be such that $w_{k'}=2$. Consider the orbit of $y$ under $f^{(2)}$. Denote ${e(m) = \bigl(f^{(2)}\bigr)^{m-1}(y)}$ for ${1\leq m}$ and let $\ell$ be the last step before the orbit is cycling, \textit{i.e.} the smallest integer such that ${e(\ell+1)\in\{e(1),\ldots,e(\ell)\}}$.
    Say ${e(\ell+1)=e(i)}$.	We know that ${i>2}$ since $e(1) = y$ and $e(2) =z$.
    Furthermore, we have ${f^{(2)}(e(i-1)) = e(i) = f^{2}(e(\ell))}$ with ${e(i-1) \neq e(\ell)}$.
    We know that we have for any $j,j' \in [\ell]$, $e(j) \neq e(j') $ but they can only differ on coordinate $2$ by definition so $e^{(j)}_2 \neq e^{(j')}_2$.
    Let $y'' \in \(q\)^n$ be such that $\orphans(f^{(w_1\cdots w_{k'-1})}) = \{ y''\}$ and $y''_2 = y'_2$ as ensured by Claim~\ref{claim:uniqueorphan}. Since $e(1) = y'$ we have $e(\ell)_2 \neq y''_2$ and $ e(i-1)_2 \neq y''_2$ (configurations $e(m)$ differ on coordinate $2$ by definition).
    So $e(\ell)$ and $e(i-1)$ have pre-images $\alpha$ and $\beta$ under $f^{(w_1\cdots w_{k'-1})}$, but since ${f^{(2)}}$ sends both ${e(\ell)}$ and ${e(i-1)}$ to ${e(i)}$, then ${f^{(w_1\cdots w_{k'})}}$ sends both $\alpha$ and $\beta$ to ${e(i)}$ which means that the rank of ${f^{(w_1\cdots w_{k'})}}$ is at least one less than the rank of ${f^{(w_1\cdots w_{k'-1})}}$, that is at most ${q^n-2}$. Then $h$ would have rank at most $q^n-2$ which is a contradiction. We thus show that $2$ is only updated once in $w$, at step $k$. In particular we must have for any $x\in \(q\)^n$ ${h(x)_2 = f^{(w_1\cdots w_k)}(x)_2}$.
    If $a', b', c',d'$ are the respective images of $a,b,c,d$ by $f^{(w_1\cdots w_{k-1})}$, then the definition of $h$ forces ${f^{(2)}(x)=x}$ for any ${x\in \(q\)^n\setminus\{b',c',d'\}}$. From Claim~\ref{claim:twoupdated}, we know that $f^{(2)}$ is of rank $q^n-1$ and we must have ${f^{(2)}(a')=f^{(2)}(b')=a'}$ (otherwise we would create another pair of configurations than $a,b$ with the same image under $h$). This shows that $a'_{\coords{n} \setminus \{2\}} = b'_{\coords{n} \setminus \{2\}}$.	Furthermore, we prove that $f^{(2)}(c') = b'$.
    First, ${f^{(2)}(c')\not\in\{c',d'\}}$ because $f_2(c') = h_2(c) = 1$, $c'_2 = 0$ and $d'_2 = 2$.
    Second, if $f^{(2)}(c') \neq b'$, then $f^{(2)}(c')$ would belong to $\(q\)^n\ \setminus\ \{b',c',d'\}$ and, consequently, $f^{(w_1\cdots w_k)}(c) = f^{(2)}(c') = f^{(2)}(x') = f^{(w_1\cdots w_k)}(x)$ for some $x \in \(q\)^n$ and $h(x) = h(c)$. This is impossible as it would create another pair of configurations than $a,b$ with the same image under $h$.
    So it holds $f^{(2)}(c') = b'$ and $c'_{\coords{n} \setminus \{2\}} = b'_{\coords{n} \setminus \{2\}} = a_{\coords{n} \setminus \{2\}}$.
    However, $c'_2 = c_2 = 0 = a_2 = a'_2$.	We deduce $c' = a'$ and $h(c) = h(a)$, which contradicts the definition of $h$.
  \end{enumerate}
  The theorem follows.\qed
\end{proof}

\section{Proof of Theorem~\ref{theorem:complet_avec_initialisation}}

\begin{figure}
	\centering
	\begin{tikzpicture}[scale=1]
	\tikzstyle{grossefleche} = [-{>[length=1.5mm]}]
	\tikzstyle{config} = [draw,outer sep=0,inner sep=1,minimum size=10,fill=white]
	\tikzstyle{configOrange} = [draw,outer sep=0,inner sep=1,minimum size=10,fill=orange]

	\node[config,fill=orange]    (vooo) at (-3,2) {$\0\0\0$};
	\node[config,fill=blue,text=white]       (voio) at (-1,2) {$\0\1\0$};
	\node[config,fill=red]       (vioo) at (-3,0) {$\1\0\0$};
	\node[config]       (viio) at (-1,0) {$\1\1\0$};
	\node[config]    	(vooi) at (1,2) {$\0\0\1$};
	\node[config,fill=green]       (voii) at (3,2)  {$\0\1\1$};
	\node[config,fill=purple]       (vioi) at (1,0)  {$\1\0\1$};
	\node[config]       (viii) at (3,0)  {$\1\1\1$};
	
	\draw[grossefleche] (vooo) edge[] node[above]{$2+$}  (voio);
	\draw[grossefleche] (voio) edge[bend left] node[above]{$3+$}  (voii);
	\draw[grossefleche] (voii) edge[] node[below]{$2^-$}  (vooi);
	\draw[grossefleche] (vooi) edge[] node[right]{$1^+$}  (vioi);
	\draw[grossefleche] (vioi) edge[bend right] node[above]{$3^-$}  (vioo);
	\draw[grossefleche] (vioo) edge[] node[right]{$1^-$}  (vooo);
	\end{tikzpicture}
	
	\caption{Cycle of 3-bits configurations controlling the behavior of the automata network $f$ from Theorem~\ref{theorem:complet_avec_initialisation}.}
	\label{fg_initialised_complete}
\end{figure}
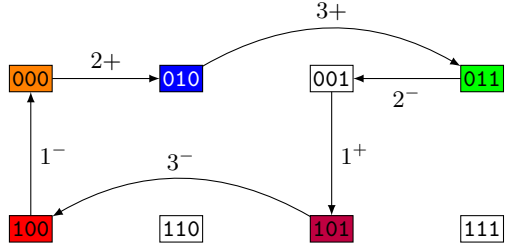
\begin{proof}
  The starting idea is to view a configuration $\(q+1\)^q$ as a configuration of $\(q\)^q$ together with a bit of information. To formalize this, let $Z^+$ be the set of configuration of ${\(q+1\)^q}$ with a single coordinate in state $q+1$, and let ${Z=\(q\)^q\cup Z^+}$. The Boolean value of some configuration is given by the map ${\lambda:\(q+1\)^q\rightarrow\(2\)}$ such that
$\lambda(z) = \begin{cases}
\0 &\text{ if } z \in \(q\)^q\\
\1 &\text{ otherwise}
\end{cases}.$ Moreover let's define an interpretation map ${\mu : \(q+1\)^q\to\(q\)^q}$  as follows:
\begin{itemize}
\item if $z \not\in Z$ then fix $\mu(z)$ arbitrarily (it doesn't matter below);
\item if $z \in \(q\)^q$ then $\mu(z) = z$;
\item finally, if $z \in Z^+$ then there is a unique  $j \in \coords{q}$ such that $z_j = q+1$ and we let ${(\mu(z))_{\coords{q} \setminus \{ j \}} = z_{\coords{q} \setminus \{ j \}}}$ and ${(\mu(z))_j = (j-1) - \sum\limits_{i \in \coords{q} \setminus \{ j \} } z_i}$.
\end{itemize}
Note that $\mu(Z^+) = \(q\)^q$. Indeed if ${x=(x_1,\ldots,x_q)}$ we let ${j-1=\sum_{i\in\coords{q}} x_i\bmod q}$ and
\[z=(x_1,\ldots,x_{j-1},q+1,x_{j+1},\ldots,x_q).\] Then one can check that ${\mu(z)=x}$. We now define two functions $\boxplus: Z \mapsto Z$ and $\boxminus: Z \mapsto Z$ which sends any element of $Z$ to $Z^+$ or $\(q\)^n$ respectively while preserving the image by $\mu$.  We define $\forall z \in Z$
\begin{align*}
\forall j \in \coords{q},\ \boxplus_j(z) &= \begin{cases}
q+1 &\text{if } z \in \(q\)^n \text{ and } j-1 =  \sum\limits_{ i \in \coords{q}  } z_{i}\bmod q,  \\
z_j &\text{otherwise}, 
\end{cases}\\
\forall j \in \coords{q},\ \boxminus_j(z) &= \begin{cases}
(j-1) - \sum\limits_{ i \in \coords{q} \setminus \{j\}  } z_i\bmod q &\text{if } z_i = q+1\\
z_j &\text{otherwise}.
\end{cases}
\end{align*}

First remark that for any ${z\in Z}$ it holds that $z'=\boxplus^{(1,2,\dots,q)}(z)\in Z^+$ verifies $\mu(z) = \mu(z')$.
Similarly,  $z''=\boxminus^{(1,2,\dots,q)}(z)\in\(q\)^n$ verifies $\mu(z) = \mu(z'')$.

Thanks to the above we can encode $3$ bits and a configuration of $\(q\)^n$ in any configuration of ${\(q+1\)^n}$ as follows (recall that ${n\geq 3q}$). Each bit is encoded using a bloc of nodes of size $q$ and the coding through $\lambda$ and $Z$ described above. The configuration of $\(q\)^n$ is read through $\mu$ in these three blocs and directly in the remaining nodes. More precisely, for any $i \in \coords{3}$ and $j \in \coords{q}$, define $\Phi^i_j = (i-1) q + j$.
For any $i \in \coords{3},$ define $\Phi^i = \{ \Phi^i_1, \dots \Phi^i_q \}$.
Finally, let $R = [3q+1,n]$, so that we have $\coords{n} = \Phi^1 \cup \Phi^2 \cup \Phi^3 \cup R$.
Consider now the function $\varphi: \(q+1\)^n \to \(q\)^n$ such that if ${z_R\in \(q\)^R}$ then ${\varphi(z)_{\Phi_p}=\mu(z_{\Phi_p})}$ for ${p=1,2,3}$ and ${\varphi(z)_R=z_R}$, and its value is defined arbitrarily for any other $z\in \(q+1\)^n$.
Moreover, define $\phi: \(q+1\)^n \to \(2\)^3$ as:
\[ 
\phi: z \mapsto  \lambda(z_{\Phi^1})  \lambda(z_{\Phi^2}) \lambda(z_{\Phi^3}). 
\]

Thus we interpret any ${z\in \(q+1\)^n}$ as an element of ${\(q\)^n\times\(2\)^3}$ by ${z\mapsto(\varphi(z),\phi(z))}$. Through this interpretation, our automata network will intuitively work as follows: the configuration component $\(q\)^n$ is the one on which we actually apply transformations, and the three bits serve as a global control state which governs the transformation we are applying at each step. The key property of our coding is  that any $z\in\(q\)^n$ is such that ${\phi(z)=000}$ and $\varphi(z)=z$. So a simulation by initialization actually means that our control state is initialized to $000$. Our construction is such that we only reach configurations $z'\in\(q+1\)^n$ such that $z'_R\in\(q\)^R$.

We first define the map $\Psi: \(q+1\)^n \to \(q\)^n$ which describes how the configuration component is modified depending on the control states. It behaves roughly as a $q$-ary counter, except for some coordinates in the blocs of nodes encoding the three bits.
For any $z\in\(q+1\)^n$, and for any $i\in\coords{n},$ we denote $x=\varphi(z)$ and $y=\phi(z)$ and let
\begin{align*}
\text{if } i = \Phi^1_1,\ \Psi_i(z) &= \begin{cases}
x_i + 1\bmod q & \text{if } y= \color{blue} \0\1\0 \color{black}, \\
1 & \text{if } y= \color{green} \0\1\1 \color{black} \text{ and } x=(\0)^n , \\
0 & \text{if } y= \color{green} \0\1\1 \color{black} \text{ and } x=1(\0)^{n-1}, \\
x_i & \text{else},
\end{cases} \\
\text{else if } i = \Phi^2_1,\ \Psi_i(z) &= \begin{cases}
x_i + 1\bmod q  & \text{if } y = \color{red}  \0\0\1 \color{black} \text{ and } x_1 = x_2 = \dots = x_{i-1} = \0,  \\
1 & \text{if }  y = \color{purple} \1\0\1 \color{black} \text{ and } x = (\0)^n,  \\
x_i & \text{else},
\end{cases} \\
\text{else if } i = \Phi^3_1,\ \Psi_i(z) &= \begin{cases}
x_i + 1\bmod q  & \text{if } y = \color{orange} \0\0\0 \color{black} \text{ and }  x_1 = x_2 = \dots = x_{i-1} = \0, \\
x_i & \text{else},
\end{cases} \\
\text{else, } \Psi_i(z) &= \begin{cases}
x_i + 1\bmod q & \text{if } x_1 = x_2 = \dots = x_{i-1} = 0,\\
x_i & \text{else.}
\end{cases} \\
\end{align*}

We now can finally define the automata network $f \in \functions(n,q')$ as follows: apply $\Psi$ on the configuration component and update the three control bits according to Figure~\ref{fg_initialised_complete}. Precisely:
\begin{itemize}
\item for each arc $ y \xrightarrow{i^+} y'$ in Figure~\ref{fg_initialised_complete} and each $z \in \(q+1\)^n$ such that 
$\phi(z) = y$ 
,  $f_{\Phi^i}(z) =  \boxplus(z_{\Phi^i})$ if $z_{\Phi^i} \in Z$ (and an arbitrary value otherwise);
\item for each arc $ y \xrightarrow{i^-} y'$ in Figure~\ref{fg_initialised_complete} and each $z \in \(q+1\)^n$ such that  
$\phi(z) = y$
, $f_{\Phi^i}(z) =  \boxminus(z_{\Phi^i})$ if $z_{\Phi^i} \in Z$ (and an arbitrary value otherwise);
\item for any other case $f_j(z) =  \Psi_j(z)$.
\end{itemize}

For any $i \in \coords{3}$, let's define the update sequence $\tilde{i} = \Phi^i_1\Phi^i_2\cdots \Phi^i_q$ and let $\tilde{4} = (3q+1)\cdots n$.

First, it holds that for any $x \in \(q\)^n$, the function $f^{(\tilde{2}\tilde{1}\tilde{3}\tilde{2}\tilde{1}\tilde{3}\tilde{2}\tilde{1}\tilde{3}\tilde{4})}$ maps $x\color{orange}\0\0\0\color{black}$ to $c(x)\color{orange}\0\0\0\color{black}$ with $c$ the circular permutation from the proof of Theorem~\ref{theorem:complet_par_facteur}. Indeed:
\[
\begin{array}{ll}
x\color{orange} \0\0\0 \color{black}
\xrightarrow{\tilde{2}} x \color{blue} \0\1\0 \color{black} 
\xrightarrow{\tilde{1}} c^{(\coords{q})}(x) \color{blue} \0\1\0 \color{black}
\xrightarrow{\tilde{3}} c^{(\coords{q})}(x) \color{green} \0\1\1 \color{black}
\xrightarrow{\tilde{2}} c^{(\coords{q})}(x) \color{black} \0\0\1 \color{black}
\xrightarrow{\tilde{1}} c^{(\coords{q})}(x) \color{purple} \1\0\1 \color{black}\\
\xrightarrow{\tilde{3}} c^{(\coords{q})}(x) \color{red} \1\0\0 \color{black}
\xrightarrow{\tilde{2}} c^{(\coords{2q})}(x) \color{red} \1\0\0 \color{black}
\xrightarrow{\tilde{1}} c^{(\coords{2q})}(x) \color{orange} \0\0\0 \color{black}
\xrightarrow{\tilde{3}} c^{(\coords{3q})}(x) \color{orange} \0\0\0 \color{black}
\xrightarrow{\tilde{4}} c(x) \color{orange} \0\0\0 \color{black}.
\end{array}
\]

Next, for any $x \in \(q\)^n$, the function $f^{(\tilde{2}, \tilde{3}, \tilde{1}, \tilde{2}, \tilde{1}, \tilde{3}, \tilde{1})}$ maps $x\color{orange}\0\0\0\color{black}$ to $k(x)\color{orange}\0\0\0\color{black}$ where $k$ is the transposition $ ((\0)^n \leftrightarrow \1 (\0)^{n-1})$. Indeed:
\[
\begin{array}{ll}
x \color{orange} \0\0\0 \color{black} 
\xrightarrow{\tilde{2}} x\color{blue} \0\1\0 \color{black}
\xrightarrow{\tilde{3}} x\color{green} \0\1\1 \color{black}
\xrightarrow{\tilde{1}} k(x)\color{green} \0\1\1 \color{black}
\xrightarrow{\tilde{2}} k(x) \color{black} \0\0\1 \color{black}
\xrightarrow{\tilde{1}} k(x) \color{purple} \1\0\1 \color{black}
\xrightarrow{\tilde{3}} k(x) \color{red} \1\0\0 \color{black}
\xrightarrow{\tilde{1}} k(x) \color{orange} \0\0\0 \color{black}.
\end{array}
\]

Finally, for any $x \in \(q\)^n$, the function $f^{(\tilde{2}, \tilde{3}, \tilde{2}, \tilde{1}, \tilde{2}, \tilde{3}, \tilde{1})}$ maps $x\color{orange}\0\0\0\color{black}$ to $d(x)\color{orange}\0\0\0\color{black}$ with $d$ the assignment $ ((\0)^n \to (\0)^q \1 (\0)^{n-q-1})$. Indeed:
\[
\begin{array}{ll}
(x\color{orange} \0\0\0 \color{black}) 
\xrightarrow{\tilde{2}} x\color{blue} \0\1\0 \color{black}
\xrightarrow{\tilde{3}} x\color{green} \0\1\1 \color{black}
\xrightarrow{\tilde{2}} x\color{black} \0\0\1 \color{black}
\xrightarrow{\tilde{1}} x\color{purple} \1\0\1 \color{black}
\xrightarrow{\tilde{2}} x\color{purple} \1\0\1 \color{black}
\xrightarrow{\tilde{3}} x\color{red} \1\0\0 \color{black}
\xrightarrow{\tilde{1}} d(x) \color{orange} \0\0\0 \color{black}.
\end{array}
\]	

The theorem follows because function $c$, $k$ and $d$ generate $\functions(n,q)$.\qed
\end{proof}

\section{Proof of Proposition~\ref{prop:S(q,n)}}

\begin{proof}
Let $f \in S(n,q)$ and express it as $f = h \circ g$ where $g$ is a singular instruction, that updates the coordinate $v$. Then there exist $a,b$ such that $g(a) = g(b)$; they must satisfy $a_k = b_k$ for all $k \ne v$. Thus $f(a) = f(b)$, where $a$ and $b$ only differ in one coordinate.

Conversely, let $f$ satisfy $f(a) = f(b)$, where $a$ and $b$ only differ in one coordinate, say $v$. Then there exists $h$ such that $f = h \circ (a \to b)$ and instructions $g^2, \dots, g^L$ such that $f = g^L \circ \cdots \circ g^2 \circ (a \to b)$.

\begin{myclaim}
For any permutation instruction $p$ and any singular transformation $g$, there exists a singular instruction $s$ updating the same coordinate as $p$ such that $s \circ g = p \circ g$.
\end{myclaim}

\begin{proof}
Say $p$ updates coordinate $u$. Let $z$ be an orphan of $g$ and define the instruction $s$ updating $u$ as 
\[
	s_u(x) = \begin{cases}
	p_u(z) + 1	&\text{if } x = z,\\
	p_u(x) 		&\text{otherwise.}
	\end{cases}
\]
Since $p$ is a permutation, there exists $y \in \(q\)^n$ which only differs from $z$ on coordinate $u$ such that $p_u(y) = s_u(z)$. Then $s$ is singular (since $s(y) = s(z)$) and $s(g(x)) = p(g(x))$ for any $x \in \(q\)^n$.\qed
\end{proof}

Now convert any permutation instruction $g^i$ ($2 \le i \le L$) into the corresponding singular instruction $h^i$ such that 
\[
	g^i \circ (g^{i-1} \circ \dots \circ g^2 \circ (a \to b)) = h^i \circ (g^{i-1} \circ \dots \circ g^2 \circ (a \to b)).
\]
We thus express $f$ as a composition of singular instructions only.\qed
\end{proof}

\section{Proof of Theorem~\ref{th:S(q,n)}}

\begin{proof}
Let $f$ be an instruction in $\Sing(n,q)$; without loss of generality, say it updates the first coordinate. We first prove that it can be generated by instructions of rank $q^n-1$. For any $z \in \(q\)^n$, we denote $z_{-1} = (z_2, \dots, z_n)$ and $\hat{z} = \{y \in \(q\)^n: y_{-1} = z_{-1}\}$. Let $a,b \in \(q\)^n$ satisfy $f(a) = f(b)$; we remark that $a_{-1} = b_{-1}$ and hence $b \in \hat{a}$. We then have $f = g \circ (a \to b)$, where 
\[
	g(x) = \begin{cases}
	a &\mbox{if } x = a\\
	f(x) &\mbox{otherwise}
	\end{cases}
\]
is an instruction that fixes $a$.

For every $y \in \(q\)^{n-1}$, we denote the instruction $g^y$ such that
\[
	g^y(x) = \begin{cases}
	g(x) & \mbox{if } x_{-1} = y\\
	x & \mbox{otherwise;}
	\end{cases}.
\]
Clearly, all $g^y$ commute with each other, hence we can write $g = \bigovoid_{y \in \(q\)^{n-1}} g^y$ and
\[
	f = \left( \bigovoid_{y \in \(q\)^{n-1}} g^y \right) \circ (a \to b).
\] 
Moreover, if $y \ne a_{-1}$, then $g^y$ commutes with $(a \to b)$; since the latter is idempotent, we can distribute it as follows:
\[
	f = (g^{a_{-1}} \circ (a \to b)) \circ  \bigovoid_{y \ne a_{-1}} \left( g^y \circ (a \to b) \right).
\]
Each $g^y$ can be viewed as a transformation of $\(q\)$ acting on $\hat{y}$; as such, it can be expressed as a composition of transpositions and assignments. If $y = a_{-1}$, then $g^{a_{-1}}$ is singular, hence it can be expressed as a composition of assignments. Thus, the first term $g^{a_{-1}} \circ (a \to b)$ is the composition of assignment instructions. Any other term $g^y \circ (a \to b)$ is the composition of assignment instructions and of instructions of the form $[a,b,w,x] := (a \to b) \circ (w \leftrightarrow x) = (w \leftrightarrow x) (a \to b)$, where $w_{-1} = x_{-1} \ne a_{-1}$.

All that is left to prove is that $[a,b,w,x]$ is a composition of assignment instructions. Let $z$ such that $z_{-1} = w_{-1} = x_{-1}$ and $z_1 \notin \{ w_1, x_1 \}$ (such a $z$ exists since $q \ge 3$). Moreover, let $z^0 = a$, $z^1 = (z_1, a_2, \dots, a_n)$, and so on until $z^n = z$. We remark that $z^i$ and $z^{i+1}$ only differ in at most coordinate, and that $z^i$ is never equal to $w$ or $x$. Then we can compute $[a,b,w,x]$ as follows:
\[
	[a,b,w,x] = (a \to z^1) \circ \cdots \circ (z \to x) \circ (x \to w) \circ (w \to z) \circ (z \to z^{n-1}) \circ \cdots \circ (z^1 \to a) \circ (a \to b).
\]\qed
\end{proof}

\section{Proof of Lemma~\ref{lem:strong_all_transpositions}}

\begin{proof}
Clearly, \ref{it:sequential_all_transpositions} implies \ref{it:asynchronous_all_transpositions}. We prove \ref{it:asynchronous_all_transpositions} implies \ref{it:strong_all_transpositions}. If $D$ is not strong, then suppose there is no path from $u$ to $v$ in $D$. It is easy to see that for any $g \in \asynchronousSemigroup{\functions(D,q)}$, $g_v$ does not depend on $x_u$. Therefore, $\overline{(u \leftrightarrow v)} \notin \asynchronousSemigroup{\functions(D,q)}$.

We prove \ref{it:strong_all_transpositions} implies \ref{it:sequential_all_transpositions}. We prove that $\overline{(u \leftrightarrow v)} \in \sequentialSemigroup{ \functions( D, q ) }$ for any $u$, $v$ such that $(u,v) \in E(D)$. There is a path from $v$ back to $u$, say $v, w_1,w_2,\ldots,w_l,u$. The instructions simulating $\overline{(u \leftrightarrow v)}$ can be represented as follows. Any instruction $g$ updating coordinate $v$ is written $y_v \gets g_v(y)$; we keep track of what $g_v(y)$ actually means in terms of the original configuration $x$ on the right hand side. Composition is done top to bottom:
\begin{alignat*}{2}
	y_v 	&\gets y_u + y_v 				&\qquad & x_u + x_v\\
	y_{w_2} &\gets y_{w_2} + y_{w_1} 		&\qquad & x_{w_2} + x_{w_1}\\
	y_{w_3} &\gets y_{w_3} + y_{w_2} 		&\qquad & x_{w_3} + x_{w_2} + x_{w_1}\\
	&\vdots &&\\
	y_u 	&\gets y_u + y_{w_l} 			&\qquad & x_u + x_{w_l} + \cdots + x_{w_1}\\
	y_{w_l} &\gets y_{w_l} - y_{w_{l-1}} 	&\qquad & x_{w_l}\\
	&\vdots &&\\
	y_{w_2} &\gets y_{w_2} - y_{w_1}		&\qquad & x_{w_2}\\
	y_{w_1} &\gets y_{w_1} + y_v			&\qquad & x_{w_1} + x_u + x_v\\
	y_{w_2} &\gets y_{w_2} + y_{w_1} 		&\qquad & x_{w_2} + x_{w_1} + x_u + x_v\\
	&\vdots &&\\
	y_{w_l}	&\gets y_{w_l} + y_{w_{l-1}} 	&\qquad & x_{w_l} + \cdots + x_{w_1} + x_u + x_v\\
	y_u		&\gets y_{w_l} - y_u			&\qquad & x_v\\
	y_{w_l} &\gets y_{w_l} - y_{w_{l-1}} 	&\qquad & x_{w_l}\\
	&\vdots &&\\
	y_{w_1} &\gets y_{w_1} - y_v			&\qquad & x_{w_1}\\
	y_v 	&\gets y_v - y_u				&\qquad & x_u
\end{alignat*}
Second, since $D$ is strong, its undirected version has a spanning tree, hence we can generate all permutations of variables.\qed
\end{proof}

\end{document}